\newtheorem{theorem}{Theorem}
\newtheorem{corollary}[theorem]{Corollary}
\newcommand{\bra}[1]{\langle #1|}
\newcommand{\ket}[1]{|#1\rangle}
\newcommand{\braket}[2]{\langle #1|#2\rangle}
\renewcommand{\cent}[0]{\mbox{\textcent}}
\newcommand{\lmoon}[0]{\leftmoon\mspace{-3mu}}
\newcommand{\rmoon}[0]{\mspace{-3mu}\rightmoon}
\newcommand{\dollar}[0]{\$}
\newcommand{\mymatrix}[2]{\left( \begin{array}{#1} #2\end{array} \right)}
\newcommand{\myvector}[1]{\mymatrix{c}{#1}}
\newcommand{\mypar}[1]{\left( #1 \right) }
\newcommand{\tildesigma}{\widetilde{\Sigma}}
\newcommand{\tildegamma}{\widetilde{\Gamma}}
\newcommand{\tildex}{\tilde{x}}
\newcommand{\hatc}{\hat{c}}
\newcommand{\usquare}{\mathtt{USQUARE}}
\title{Classical and quantum Merlin-Arthur automata}
\author{Abuzer Yakary\i lmaz$^{1,2}$ \\ ~ \\
$^1$Center for Quantum Computer Science, Faculty of Computing, \\ University of Latvia, R\={\i}ga, Latvia \and $^2$QWorld Association, Tallinn, Estonia, \url{https://qworld.net}
\\ ~ \\
\textit{email: abuzer.yakaryilmaz@lu.lv}
} 
\begin{document}
\maketitle

\begin{abstract}
We introduce Merlin-Arthur (MA) automata where Merlin provides a certificate at the beginning of computation and it is scanned by Arthur before reading the input. We define Merlin-Arthur deterministic, probabilistic, and quantum finite state automata (resp., MA-DFAs, MA-PFAs, and MA-QFAs) and postselecting MA-PFAs and MA-QFAs (resp., MA-PostPFA and MA-PostQFA). We present several results using different certificate lengths.

We show that MA-DFAs use constant length certificates, and they are equivalent to multi-entry DFAs. Thus, they recognize all and only regular languages, but they can be exponential and polynomial state efficient over binary and unary languages, respectively. With sublinear length certificates, MA-PFAs can recognize several nonstochastic unary languages with cutpoint 1/2. With linear length certificates, MA-PostPFAs can recognize these nonstochastic unary languages with bounded error. With arbitrarily long certificates, bounded-error MA-PostPFAs can verify every unary decidable language. With sublinear length certificates, bounded-error MA-PostQFAs can verify several nonstochastic unary languages. With linear length certificates, they can verify every unary language and some NP-complete binary languages. With exponential length certificates, they can verify every binary language.
\end{abstract}

{
  \small  \textbf{\textit{Keywords:}} Interactive proof systems, finite automata, quantum computing, probabilistic computation, postselection
}

\section{Introduction}

Interactive proof systems (IPSs) are one of the fundamental concept in complexity theory by extending the classical proof system with randomness and interaction \cite{Bab85,GMR89}. An IPS for a language $L$ has two parties: a prover $ P $ and a verifier $V$. The verifier $V$ is a (resource bounded) probabilistic machine, and there is no computational limitations on the prover $P$ as well as any possible (potentially cheating) provers $P^*$. The verifier $V$ accepts the member strings of $L$ with high probability by interacting with the prover $P$, and it rejects any string not in $L$ with high probability while interacting with any possible prover $P^*$. An IPS is called private-coin if the verifier hides its probabilistic choices \cite{GMR89}. If the probabilistic choices are visible to the verifier, then it is called public-coin. Hiding probabilistic choices allows us to design stronger protocols \cite{Con93}. The public-coin IPSs are also known as Arthur-Merlin (AM) systems or games \cite{Bab85}, where Arthur represents the verifier and Merlin represents the prover. If Merlin provides her proof at once and at the beginning of computation, then it is called a Merlin-Arthur proof system.

Quantum interactive proof systems (QIPs) were introduced in \cite{Wat99A}, where the prover and verifier send quantum messages (states) to each other. Quantum versions of Arthur-Merlin and Merlin-Arthur proof systems (resp., QAM and QMA) were also defined in the literature \cite{AN02,KSV02,Wat09A}. In QMA proof systems, the verifier is provided with a quantum proof (state) on the given input. If the proof is classical, it is called as QCMA (or MQA) \cite{AN02,Wat99A}.  

In polynomial time, we know that QIPs, IPSs, AM proof systems are equivalent, and they verify all and only the languages in $\mathsf{PSPACE}$ \cite{GS86,Shamir92,JJUW10}. Polynomial time MA or QMA proof systems are believed to be weaker, and any language verified by them is in $\mathsf{PP}$ \cite{KW00}. The complexity class of polynomial time QMA proof systems has drawn special interests with several known complete problems \cite{Bookatz14}.

Constant space verifiers (two-way probabilistic finite state automaton (2PFA) verifiers) were introduced in \cite{DS92}. As the verifiers have only finite number of states for their memories, the communication with the prover happens symbol by symbol. Private-coin 2PFA verifiers are quite powerful, and for example they can verify every language in $\mathsf{E}$ (i.e., exponential time with linear exponent). On the other hand, public-coin verifiers are weaker, as any language verified by them is in $\mathsf{P}$, and  they cannot verify certain languages in $\mathsf{L}$ such as $\mathtt{PAL} = \{ w \in \{a,b\}^* \mid w = w^R \}$. 

Private-coin (one-way) PFA verifiers can verify any language recognized by multi-head deterministic finite automata even using only constant amount of randomness \cite{SY14B}. However public-coin PFA verifiers cannot verify any nonregular languages \cite{CHPW98}. PFA verifiers with postselecting capability were introduced in \cite{DY18A,DY18A-arXiv}.

Quantum finite automata (QFAs) verifiers have been examined in different set-ups \cite{NY04,NY09,Yak13C,Yam14,VY14,ZQG15,YSD16,SY17}. Related to our work, AM systems with two-way (classical head) QFA verifiers using rational-valued transitions can verify every language known to be verified by private-coin IPS with 2PFA verifiers as well as some NEXP-complete languages \cite{Yak13C}. If real-valued transitions are allowed, then they can verify every language including uncomputable ones \cite{SY17}.

MA systems with QFA verifiers were first introduced in \cite{VY14}, where the certificate provided by Merlin is placed on a parallel track of the input. We introduce a different model, which is similar to the model ``Automata that take advice'' \cite{DH95}: the certificate provided by Merlin is fed to the verifier before reading the input. We remark that the set-up in \cite{VY14} may also be seen as one-way private-coin IPSs where the certificate is read (synchronously or asynchronous) in parallel to reading the input (e.g., \cite{SY14,DY18A}).

We define Merlin-Arthur deterministic, probabilistic, and quantum finite state automata (resp., MA-DFAs, MA-PFAs, and MA-QFAs) and postselecting MA-PFAs and MA-QFAs (resp., MA-PostPFA and MA-PostQFA). We present several results where the certificate length for the member strings are sublinear, linear, exponential, and arbitrarily long.

We show that MA-DFAs can benefit from only the constant length certificates, and they are equivalent to multi-entry DFAs. Thus, they recognize all and only regular languages, but they can be exponential and polynomial state efficient over binary and unary languages, respectively. With sublinear length certificates, MA-PFAs can recognize several nonstochastic unary languages with cutpoint 1/2. With linear length certificates, MA-PostPFAs can recognize these nonstochastic unary languages with bounded error. With arbitrarily long certificates, bounded-error MA-PostPFAs can verify every unary decidable language. With sublinear length certificates, bounded-error MA-PostQFAs can verify several nonstochastic unary languages. With linear length certificates, they can verify every unary language and some NP-complete binary languages. With exponential length certificates, they can verify every binary language.

We assume the reader familiar with the basics of automata theory, complexity theory, and quantum computing (see e.g., \cite{NC00,Wat09A,Sip13,SY14,AY21}). We give the definitions and notations used throughout the paper in the next section. Then, we introduce our models in Section~\ref{sec:MA-def}. We present our results using sublinear, linear, exponential, and arbitrarily long length certificates in Sections~\ref{sec:sublinear}, \ref{sec:linear}, \ref{sec:exp}, and \ref{sec:arbitrary}, respectively. We close the paper with Section~\ref{sec:conc-remark}.

\section{Preliminaries}
\label{sec:pre}

We denote the input alphabet $\Sigma$ not containing the left and right end-markers ($\cent$ and $\dollar$, respectively); and $\tildesigma = \Sigma \cup \{\cent,\dollar\}$.  We denote the certificate alphabet $\Gamma$ not containing the symbols $ \lmoon $ and $\rmoon$, indicators for the beginning and end of the certificate (proof), respectively; and $\tildegamma = \Gamma \cup \{ \lmoon, \rmoon \}$. We denote the empty string $\varepsilon$ (its length is zero) and it is the same for all alphabets. For any given (non-empty) string $x$, $|x|$ is its length and $x_i$ represents its $i$-th symbol from the left.

Any given input $x \in \Sigma^*$ is read as $\tildex = \cent x \dollar $, from left to right and symbol by symbol. Any given certificate $c \in \Gamma^*$ is read as $\hatc = \lmoon c \rmoon$, from left to right and symbol by symbol. 

\subsection{Standard models}

An $m$-state probabilistic finite automaton \cite{Rab63,Paz71} (PFA) M is a 5-tuple
\[
    M = (\Sigma,S,\{ A_\sigma \mid \sigma \in \tildesigma\}, s_I, S_a ),
\]
where
\begin{itemize}
    \item $S = \{s_1,\ldots,s_m\}$ is the finite set of states,
    \item $A_\sigma$ is the (left) stochastic transition matrix for the symbol $\sigma \in \tildesigma$,
    \item $s_I \in S$ is the initial (starting) state, which is $s_1$ if not specified otherwise, and,
    \item $S_a \subseteq S$ is the set of accepting state(s).
\end{itemize}

The computation of $M$ is traced by an $m$-dimensional column vector called (probabilistic) state vector, which is a stochastic vector. Its $j$-th entry represents the probability of $M$ being in the state $s_j$, which is a non-negative real number. The entry summation of state vector must be 1. The computation starts in the state vector $v_0$ with $v_0[I]=1$ (and so the rest entries are zeros). 

Any given input $x \in \Sigma^*$ with length $l$, $M$ reads the input as
\[
    \cent~x_1~x_2~\ldots~x_l~\dollar. 
\]
Then, the computation evolves as
\[
v_f = A_{\dollar} A_{x_l} A_{x_{l-1}} \cdots A_{x_1} A_{\cent} v_0, 
\]
where $v_f$ is the final state vector. The input $x$ is accepted with probability
\[
    Acc_M(x) = \sum_{s_j \in S_a} v_f[j],
\]
and it is rejected with probability $Rej_M(x) = 1 - Acc_M(x)$.

A deterministic finite automaton (DFA) is a restricted PFA using only 0s and 1s as transition values. In the literature, the formal definition of DFAs usually does not include the end-markers. One may add or remove the end-markers by modifying the initial and accepting state(s) and without changing the number of states.

There are different variants of quantum finite automata (QFAs) in the literature \cite{AY21}. We use the most general one here. An $m$-state QFA is a 5-tuple
\[
    M = (\Sigma,Q,\{\mathcal{E}_\sigma \mid \sigma \in \tildesigma\},q_I,Q_a),
\]
where
\begin{itemize}
    \item $Q = \{q_1,\ldots,q_m\}$ is the finite set of states,
    \item $\mathcal{E}_\sigma = \{E_{\sigma,1},\ldots,E_{\sigma,k_\sigma}\}$ is the superoperator with $k_\sigma$ operation elements for the transitions of symbol $\sigma \in \tildesigma$,
    \item $q_I \in Q$ is the initial (starting) state, which is $q_1$ if not specified otherwise, and,
    \item $Q_a \subseteq Q$ is the set of accepting state(s).
\end{itemize}
Remark that $\mathcal{E} = \{E_1,\ldots,E_k\}$ is a superoperator if and only if $\sum_{j=1}^k E_j^\dagger E_j = I $, or,  alternatively, the columns of the following matrix form an orthonormal set
\[
    \begin{array}{|c|} 
    \hline  E_1  \\ \hline E_2 \\ \hline \vdots \\ \hline E_k \\ \hline
    \end{array} ~.
\]
When $\mathcal{E}$ is applied to a pure quantum state $\ket{v}$, it may create a mixture of pure states. That is, $\mathcal{E}(\ket{v})$ gives us $k$ unnormalized quantum states:
\[
    \left\{ \ket{\widetilde{v'_1}} = E_1 \ket{v} , \ket{\widetilde{v'_2}} = E_2 \ket{v} , \ldots, \ket{\widetilde{v'_k}} = E_k \ket{v}  \right\}.
\]
Here $ \ket{\widetilde{v'_j}} $ is obtained with probability $p_j = \braket{ \widetilde{v'_j} }{ \widetilde{v'_j} }$. Thus, after normalization,
\[
    \ket{ \widetilde{v'_j}} \rightarrow \ket{v'_j} = \frac{\ket{ \widetilde{v'_j}}}{\sqrt{p_j}},
\] and, we have a mixture of
\[
    \left\{  (p_1,\ket{v'_1}),(p_2,\ket{v'_2}),\ldots,(p_k,\ket{v'_k})  \right\} .
\]
A convenient way to represent such mixture is using density matrix:
\[
    \rho' = \sum_{j=1}^k p_j \ket{v'_j}\bra{v'_j}.
\]
Indeed, the overall calculation can be shorten as follows: We start in a density matrix $\rho = \ket{v}\bra{v}$, and then, after applying $\mathcal{E}$, we obtain the density matrix
\[
    \rho' = \mathcal{E}(\rho) = \sum_{j=1}^k E_j \rho E_j^\dagger.
\]
One nice property of density matrices is that its diagonal entries give the probabilities of being in the basis states ($\ket{q_1},\ldots,\ket{q_m}$), and so its trace (the sum of diagonal entries) is equal to 1, i.e., $tr(\rho)=tr(\rho')=1$. 

Let $x \in \Sigma^*$ be the given input with length $l$. Thus, $M$ applies $l+2$ superoperators,
\[
    \mathcal{E}_{\cent}, \mathcal{E}_{x_1}, \ldots, \mathcal{E}_{x_l}, \mathcal{E}_{\dollar}. 
\]
The computation starts in $\rho_0 = \ket{q_I}\bra{q_I}$. After reading the $j$-th symbol of $\tildex$:
\[
    \rho_j = \mathcal{E}_{\tildex_j} (\rho_{j-1}),
\]
where $1 \leq j \leq l+2$. Let $\rho_f$ be the final density matrix. Then, the input $x$ is accepted with probability
\[
    Acc_M(x) = \sum_{q_j \in Q_a} \rho_f[j,j],
\]
and, it is rejected with probability $ Rej_M(x) =  1 - Acc_M(x)$.

\subsection{Postselecting models}

Postselection is an ability to  make the decision based on a specified set of outcomes (post-selected), where the other outcomes (non-postselected) are discarded \cite{Aar05,GNY21}. A postselecting automaton or an automaton with postselection, say $M$, differs from the standard model by having three types of states: accepting, rejecting, and non-postselecting. Then, at the end of computation, it is discarded if the computation ends in a non-postselecting state. Thus, the decision is made solely based on the accepting and rejecting states. 

For any given input $x \in \Sigma^*$,
we denote the accepting, rejecting, and non-postselecting probabilities of $M$ on $x$ as $a_M(x)$, $r_M(x)$, and $n_M(x)$, respectively. Then, the input $x$ is accepted and rejected by $M$ with probabilities
\[
    Acc_M(x) = \dfrac{a_M(x)}{a_M(x)+r_M(x)} ~~\mbox{ and }~~
    Rej_M(x) = \dfrac{r_M(x)}{a_M(x)+r_M(x)},
\]
respectively. Remark that to be able to postselect it must be guaranteed that $a_M(x)+r_M(x) > 0 $.

A postselecting (realtime) automaton can be seen as a very restrict two-way automaton \cite{YS10B,YS11B,YS13A}: If the computation ends in a non-postselecting state, the computation is restarted from the initial configuration, a restricted version of sweeping automaton. It was also shown that the capability of postselecting is equivalent to send a classical bit through a Closed Timelike Curve \cite{SY12B}.

We denote postselecting PFA and QFA as PostPFA and PostQFA, respectively.

\subsection{Language recognition and classes}

The languages recognized by DFAs form the set of regular languages \cite{RS59}.

A language $L \subseteq \Sigma^*$ is said to be recognized by $M$ with cutpoint $\lambda \in [0,1)$ if and only if 
\begin{itemize}
    \item every $x \in L$ is accepted by $M$ with probability greater than $\lambda$ and
    \item every $x \notin L$ is accepted by $M$ with probability at most $\lambda$.
\end{itemize}

If we fix the language recognition mode of a PFA with cutpoint zero, then we obtain a nondeterministic finite automaton (NFA). Therefore, PFAs recognize all and only regular languages with cutpoint zero.

The class of languages recognized by PFAs with cutpoints is stochastic languages \cite{Rab63}. It was shown that QFAs  recognize all and only stochastic languages with cutpoints \cite{YS11A}. On the other hand, QFAs with cutpoint 0 recognize all and only exclusive stochastic languages \cite{YS10A}, a superset of regular languages and a proper subset of stochastic languages \cite{Paz71}.

A language $L \subseteq \Sigma^*$ is said to be recognized by $M$ with error bound $\epsilon < \frac{1}{2}$ if and only if
\begin{itemize}
    \item every $x \in L$ is accepted by $M$ with probability at least $1 - \epsilon$ and
    \item every $x \notin L$ is rejected by $M$ with probability at least $1 - \epsilon$.
\end{itemize}
Then, it is also said that $L$ is recognized by $M$ with bounded error or bounded-error $M$ recognizes $L$.

Both bounded-error PFAs and QFAs recognize all and only regular languages \cite{Rab63,KW97,LQZLWM12}, and so, their computational power is equivalent to DFAs.

Bounded-error PostPFA can recognize some nonregular languages such as $\mathtt{EQUAL} = \{ a^n b a^n  \mid 
 n > 0 \}$ \cite{YS13A}. But they cannot recognize any nonstochastic language, as two-way PFAs with cutpoints recognize all and only stochastic languages \cite{Kan91}. Bounded-error PostQFA are more powerful than bounded-error PostPFA, i.e., $\mathtt{PAL} = \{ w \in \{a,b\}^* \mid w = w^R\}$ and $\mathtt{TWIN} = \{ wcw \mid w \in \{a,b\}^* \}$ are some witness languages \cite{FK94,AW02,YS10B}. Remark that any bounded-error postselecting algorithm is easily modified to recognize with cutpoint $\frac{1}{2}$. Thus, bounded-error PostQFAs cannot recognize any nonstochastic language \cite{YS13A}.

\section{Merlin-Arthur automata}
\label{sec:MA-def}

For a given input $x \in \Sigma^*$, the verifier (Arthur) receives and reads a finite length certificate, say $c \in \Gamma^*$, from the prover (Merlin) before reading the input: the automaton verifier reads 
\[
    (c,x) = \lmoon ~ c_1 ~ c_2 ~ \cdots ~ c_{|c|} ~ \rmoon ~ \cent ~ x_1 ~ x_2 ~ \cdots ~ x_{|x|} ~ \dollar
\]
from left to right and symbol by symbol and then make its decision on $(c,x)$.

We denote Merlin-Arthur DFA, PFA, and QFA as MA-DFA, MA-PFA, and MA-QFA, respectively. MA-PostPFA and MA-PostQFA are the postselecting versions of MA-PFA and MA-QFA, respectively. Their definitions are extended with the certificate alphabet $\Gamma$, and then the set of transition operators is extended for each symbol $\sigma \in \tildegamma$ that are not in $\tildesigma$.

\subsection{Language verification}

A language $L \subseteq \Sigma^*$ is verified by a MA-DFA $M$ if and only if for every input $x \in \Sigma^*$:
\begin{itemize}
    \item when $x \in L$, there is a certificate $c \in \Gamma^* $ such that $M$ accepts $(c,x)$, and,
    \item when $x \notin L$, for any certificate $ c \in \Gamma^* $, $M$ rejects $(c,x)$. 
\end{itemize}

The following definitions are for MA-PFAs and MA-QFAs.

A language $L \subseteq \Sigma^*$ is verified by a MA automaton $M$ with cutpoint $\lambda \in [0,1) $ if and only if for every input $x \in \Sigma^*$:
\begin{itemize}
    \item when $x \in L$, there is a certificate $c \in \Gamma^* $ such that $M$ accepts $(c,x)$  with probability greater than $\lambda$, and,
    \item when $x \notin L$, for any certificate $ c \in \Gamma^* $, $M$ accepts $(c,x)$ with probability at most $\lambda$. 
\end{itemize}

A language $L \subseteq \Sigma^*$ is verified by a MA automaton $M$ with error bound $\epsilon < \frac{1}{2}$ if and only if for every input $x \in \Sigma^*$:
\begin{itemize}
    \item when $x \in L$, there is a certificate $c \in \Gamma^* $ such that $M$ accepts $(c,x)$  with probability at least $1-\epsilon$, and,
    \item when $x \notin L$, for any certificate $ c \in \Gamma^* $, $M$ rejects $(c,x)$ with probability at least $1-\epsilon$. 
\end{itemize}
Then, it is also said that $L$ is verified by $M$ with bounded-error or bounded-error $M$ verifies $L$.

\subsection{``Weak'' Merlin-Arthur automata}

In this paper, we focus on only finite computation. That is, the certificates by Merlin can be arbitrarily long but they are still finite. In other words, we omit the infinite certificates by definition.

One may allow infinite certificates. Then, it can be the case that the verifier may never read the input, and so, no decision is made. Thus, we may not achieve high rejecting probabilities for the non-members, but at least the accepting probability will be zero.

A proof system is called ``weak'' \cite{DS92} if we replace the bounded-error language verification condition for the non-members as follow:
\begin{itemize}
    \item when $x \notin L$, for any certificate $c$ (including the infinite ones), $M$ accepts $(c,x) $ with probability at most $\epsilon$.
\end{itemize}
We remark that all results in this paper are obtained for ``weak'' Merlin-Arthur automata, if the infinite certificates are allowed.

\section{Verification with sublinear size certificates}
\label{sec:sublinear}

We start with constant size certificates.
MA-DFAs are always in one state during their computation. Therefore, after reading a certificate, all information about this certificate is stored as one of the states. Thus, instead of long certificates, Merlin simply says from which state the verifier should start its computation. If the MA-DFA verifier has $m$ states, then unary certificates with length less than $m$ and binary certificates with length at most $ \log_2 m$ are sufficient.

Let $L$ be a language verified by an $m$-state MA-DFA $M$ with the set of states $S$. For every $x \in L$, there is a certificate $c_x$ such that $M$ accepts $(c_x,x)$. After reading $c_x$, let $s(x) \in S$ be the state $M$ is in. That is, $x$ is accepted by $M$ when starting in state $s(x)$. In this way, we can pair every $x \in L$ with a state $s(x)$. By considering all member strings of $L$, we form a set of states as
\[ 
    S_I = \left\{ s(x) \mid x \in L \mbox{ and $M$ accepts $x$ when starting in $s(x)$} \right\}.
\]
Let $k$ be the size of $S_I$.

Suppose that for any given input $x \in \Sigma^*$, we execute $k$ copies of $M$ on $x$ and each copy starts in a different state in $S_I$. Then, the decision of ``acceptance'' is made if one copy accepts $x$. If none of copies accepts, then the decision of ``rejecting'' is given. Indeed, this is already a known model defined in the literature: $k$-entry DFA \cite{HolzerSY01}. If $k=n$, then, it is called multiple-entry finite automaton (MEFA) \cite{GillK74}. 

\begin{theorem}
    If a language $L$ is verified by an $m$-state MA-DFA, then $L$ is recognized by some $k$-entry DFAs with $m$ states, where $k \leq m$.
\end{theorem}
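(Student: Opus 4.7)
The plan is to formalize the construction sketched just before the statement. Given an $m$-state MA-DFA $M$ that verifies $L$, with state set $S$ and transition matrices on $\tildegamma \cup \tildesigma$, I would define a $k$-entry DFA $M'$ with the same state set $S$, the same transitions for symbols of $\tildesigma$, and the same accepting set $S_a$, but with initial-state set
\[
  S_I = \{ s(x) \mid x \in L \text{ and } M \text{ accepts } (c_x,x) \text{ with some } c_x \in \Gamma^* \},
\]
where $s(x)$ is the (uniquely determined) state reached by $M$ after scanning $\lmoon c_x \rmoon$. Since $S_I \subseteq S$, the bound $k = |S_I| \leq m$ is immediate, so the whole statement reduces to proving that $M'$ with entries $S_I$ recognizes exactly $L$.

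Next I would verify the two inclusions. For $L \subseteq L(M')$, if $x \in L$ then the defining certificate $c_x$ drives $M$ to $s(x) \in S_I$, and by determinism the continuation of $M$ on $\cent x \dollar$ starting from $s(x)$ ends in an accepting state; hence the copy of $M'$ launched from entry $s(x)$ accepts $x$. For $L(M') \subseteq L$, suppose some entry $s \in S_I$ of $M'$ accepts an input $y$. By the definition of $S_I$, we have $s = s(x)$ for some $x \in L$, witnessed by a certificate $c_x$ such that $M$ reaches $s$ after $\lmoon c_x \rmoon$. Concatenating this prefix with the computation of $M'$ from $s$ on $y$ reproduces the computation of $M$ on $(c_x, y)$, which therefore accepts. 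The soundness condition of $M$ as a verifier for $L$ then forces $y \in L$.

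The only minor subtlety is the treatment of the end-markers $\rmoon$ and $\cent$ that sit between the certificate and the input: they are deterministic, so applying them can be folded either into $s(x)$ or, equivalently, into the initial-state choice of $M'$, which is exactly the cosmetic adjustment the preliminaries already sanctioned when stating that end-markers can be absorbed without changing state count. I do not anticipate a genuine obstacle here; the main thing to be careful about is simply making the definition of $S_I$ precise enough that both directions of correctness fall out of the MA-DFA verification conditions with no further work.
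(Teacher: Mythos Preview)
Your proposal is correct and follows exactly the construction the paper lays out in the paragraph preceding the theorem (the paper states the theorem without a separate proof environment, treating that paragraph as the argument). If anything, your version is more complete: you spell out both inclusions $L \subseteq L(M')$ and $L(M') \subseteq L$, invoking soundness for the latter, whereas the paper leaves the second direction implicit.
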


The other way simulation can also be obtained easily.

\begin{theorem}
    If a language $L$ is recognized by a $k$-entry DFA with $m$ states, say $M$, then $L $ is verified by an $(m+1)$-state MA-DFA. 
\end{theorem}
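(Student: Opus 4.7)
The plan is to take the $k$-entry DFA $M$ with state set $S$, entry set $S_I = \{s_{I_1}, \ldots, s_{I_k}\}$, and accepting set $S_a$, and add a single fresh non-accepting state $s_0$ to serve both as the initial state of the MA-DFA $M'$ and as a trap for cheating provers. For the certificate alphabet I would take $\Gamma = \{1,\ldots,k\}$, so that the intended (honest) certificate is a single symbol $j$ naming which entry state Arthur should start $M$ from; a binary encoding of length $\lceil \log_2 k \rceil$ would work equally well.

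First I would specify the transitions of $M'$ on the certificate region. From $s_0$ on $\lmoon$, stay in $s_0$. From $s_0$ on a symbol $j \in \Gamma$, move to $s_{I_j}$. From any state in $S$ on a symbol in $\Gamma$, move to $s_0$; this handles the case in which Merlin sends a certificate longer than one symbol. On $\rmoon$ every state stays put. Once the input region begins, $M'$ absorbs $\cent$ and $\dollar$ into the rest of the simulation (as the preliminaries note, end-markers can be added or removed without changing the state count); on a symbol $\sigma \in \Sigma$ from a state $s \in S$, $M'$ applies the transition function of $M$; and from $s_0$ every input symbol keeps $M'$ in $s_0$. The accepting set of $M'$ is $S_a$, so $s_0$ is rejecting.

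Next I would verify correctness. If $x \in L$, the definition of a $k$-entry DFA supplies some $j$ with $M$ accepting $x$ when started in $s_{I_j}$; Merlin sends $c = j$, $M'$ reaches $s_{I_j}$ at the end of the certificate region, simulates $M$ on $x$ from $s_{I_j}$, and accepts. If $x \notin L$, then $M$ rejects $x$ from every entry state, so the only certificates under which $M'$ could conceivably accept are single symbols $j \in \Gamma$; but for any such $j$, $M'$ simulates $M$ from $s_{I_j}$ and rejects. Every other (empty, too long, or off-alphabet) certificate drives $M'$ into $s_0$ before input reading begins, and from $s_0$ the machine stays in $s_0$ throughout the input and rejects.

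I do not anticipate a substantive technical obstacle. The one subtle point is ensuring rejection against every possible dishonest certificate, not merely the syntactically valid ones, and a single trap state suffices for that; this is exactly why the state bound is $m+1$ rather than $m$.
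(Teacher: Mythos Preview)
Your approach is essentially the paper's: use the certificate to name one of the $k$ entry states and add a single non-accepting sink state to absorb illegitimate certificates. The construction is sound, but your analysis of dishonest certificates has a small slip. With your transitions, a certificate of odd length $\ell\geq 3$ does \emph{not} leave $M'$ in $s_0$: reading $\lmoon c_1c_2c_3\rmoon$ takes $s_0\to s_0\to s_{I_{c_1}}\to s_0\to s_{I_{c_3}}\to s_{I_{c_3}}$, so $M'$ ends the certificate region in an entry state. Thus the sentence ``Every other (empty, too long, \ldots) certificate drives $M'$ into $s_0$'' is false as stated.

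This does not break the proof, because the only states reachable at the end of the certificate region are $s_0$ and members of $S_I$, and for $x\notin L$ the machine rejects from every entry state by hypothesis. So the fix is just to replace your last sentence by that observation. If you prefer the cleaner invariant that any certificate of length $\neq 1$ lands in $s_0$, you can instead make $s_0$ a true sink on $\Gamma$ and introduce a separate ``fresh'' start state; but that costs an extra state and is unnecessary for the $(m+1)$-state bound.
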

\begin{proof}
    The verifier MA-DFA checks whether the initial state (after reading the certificate) is one of the specified initial state of $M$. If so, it simulates $M$ on the input. If not, $M$ enters a non-accepting sink state (the additional state) to make sure to reject the given input. 
    
    For any $x \in L$, Merlin provides one of the correct initial states that MA-DFA starts in. For the case of $x \notin L$, the verifier either starts in one of the $k$ states or switches to the additional sink states. In either case, the input is rejected.
\end{proof}

Multiple-entry DFAs can be exponentially more succinct than DFAs on binary languages and this bound is tight \cite{GalilS76,VelosoG79}. On unary (and binary) languages, $k$-entry DFAs can be polynomially more succinct than DFAs \cite{HolzerSY01,Polak05}, and these bounds are also tight. Thus, MA-DFAs have the same state efficiency over the DFAs.

\begin{corollary}
    MA-DFAs can recognize all and only regular languages. On the other hand, they can be exponentially state-efficient over the binary languages and polynomially state-efficient over the unary languages.
\end{corollary}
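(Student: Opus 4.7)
The plan is to derive the corollary as a straightforward consequence of the two theorems immediately preceding it, which establish a tight back-and-forth simulation between MA-DFAs and $k$-entry DFAs with only an additive constant of one state in the overhead. The three claims of the corollary then reduce to known facts about $k$-entry DFAs.

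For the equivalence with regular languages, I would argue both directions separately. The $ \supseteq $ direction is immediate: any DFA is a $1$-entry DFA, so by the second theorem any regular language is verified by an MA-DFA (with at most one extra state). For the $ \subseteq $ direction, by the first theorem any language verified by an MA-DFA is recognized by some $k$-entry DFA; but a $k$-entry DFA is a special case of an NFA (nondeterminism only at the start, deterministic afterwards), hence by the subset construction it recognizes a regular language. Combining these gives the first sentence of the corollary.

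For the succinctness statements, the strategy is to transport the cited succinctness bounds across the two simulations. On binary inputs, by the cited results of \cite{GalilS76,VelosoG79}, there exist regular languages recognized by $k$-entry DFAs with $m$ states whose minimal equivalent DFA requires $2^{\Omega(m)}$ states, and this gap is tight. Applying the second theorem converts each such $k$-entry DFA into an MA-DFA with $m+1$ states verifying the same language, which gives MA-DFAs the same exponential advantage over DFAs; the matching upper bound for \emph{any} MA-DFA follows because the first theorem converts an $m$-state MA-DFA back into an $m$-state $k$-entry DFA (with $k\le m$), whose minimal DFA has at most the known exponential-in-$m$ number of states. The unary case is identical, with \cite{HolzerSY01,Polak05} supplying a polynomial separation that is also tight.

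I do not expect a genuine obstacle here, since the heavy lifting (the simulations and the succinctness results for $k$-entry DFAs) has already been done; the only point to be careful about is that the $+1$ additive overhead in the second theorem does not disturb either the exponential or the polynomial gap, which is clear because adding a single sink state preserves asymptotic bounds of the form $2^{\Omega(m)}$ and $m^{\Omega(1)}$. Thus the corollary follows by assembling the two theorems with the cited succinctness results.
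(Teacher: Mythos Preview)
Your proposal is correct and follows essentially the same route as the paper: the paper does not give a standalone proof of the corollary but simply notes, in the paragraph immediately preceding it, that the two simulation theorems put MA-DFAs and $k$-entry DFAs within an additive constant of each other in state count, and then invokes the cited succinctness results for multiple-entry and $k$-entry DFAs over binary and unary alphabets. Your write-up is a slightly more explicit version of the same argument, including the observation that the $+1$ overhead is asymptotically irrelevant.
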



We continue with MA-PFAs. Here is a unary nonstochastic language \cite{Tur81}: 
\[
    \usquare = \{ a^{i^2} \mid i \ \geq 0  \}.
\]

\begin{theorem}
    \label{thm:mapfa-usquare}
    Language $\usquare$ is verified by a MA-PFA $M$ with cutpoint $\frac{1}{2}$ such that for any member string with length $n$, a unary certificate with length $\sqrt{n}$ is used.
\end{theorem}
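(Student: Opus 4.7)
My plan is to construct a MA-PFA $M$ with unary certificate alphabet $\Gamma = \{d\}$ that has a clean closed form for its acceptance probability $P(j,n)$ on input $a^n$ with certificate $d^j$, such that $P(j,n) > \tfrac{1}{2}$ holds if and only if $n = j^2$. Since $\Gamma$ is unary, Merlin's only degree of freedom is the length $j$ of the submitted certificate, which is what makes a simple one-variable analysis possible.

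I would view $M$ as a standard PFA reading the concatenation $\lmoon d^j \rmoon \cent a^n \dollar$, built from a handful of states and two stochastic transition matrices $D$ (for $d$) and $A$ (for $a$). The certificate-phase matrix $D$ is designed with a non-trivial Jordan block at a repeated eigenvalue $\alpha \in (0,1)$, so that after reading $d^j$ a designated internal state carries probability proportional to $j\,\alpha^j$; running two such blocks in parallel via a constant-size tensor construction yields a component proportional to $j^2\,\alpha^{2j}$, providing the crucial algebraic handle on $j^2$. The input-phase matrix $A$ is a simple geometric decay with parameter $\beta$, so that after $n$ reads it contributes a factor $\beta^n$. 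With the parameters chosen so that the two sub-processes balance exactly at $n = j^2$, I compose them with an auxiliary fair-coin branch that centres the overall acceptance probability at $\tfrac{1}{2}$, producing a closed-form $P(j,n) = \tfrac{1}{2} + \varphi(j,n)$ for an explicit $\varphi$. A twin sub-automaton contributing the opposite-sign version of $\varphi$ is included so that the combined deviation at integer arguments is strictly positive exactly on the diagonal $n = j^2$ and non-positive elsewhere. Completeness is then immediate: for $x = a^{i^2} \in \usquare$ the honest certificate $c = d^i$ has length $i = \sqrt{|x|}$ and gives $P(i,i^2) > \tfrac{1}{2}$.

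The main obstacle is soundness. For every non-square $n$, I must show that \emph{no} certificate at all causes $M$ to accept with probability above $\tfrac{1}{2}$. Because $\Gamma$ is unary, the acceptance probability depends on $c$ only through its length $j$, so the task collapses to verifying the single inequality $P(j,n) \le \tfrac{1}{2}$ for every integer $j \ge 0$ whenever $n$ is non-square. Under the closed form above, this reduces to an elementary algebraic statement that positivity of $\varphi(j,n)$ at integer pairs forces $j^2 = n$ — a condition which has no integer solution precisely when $n$ is not a perfect square. The delicate and technically loaded part of the proof will be tuning the triple of parameters $(\alpha,\beta,C)$ and the weights of the twin-automaton combination so that (i) $D$ and $A$ are genuinely stochastic matrices realising the intended Jordan and geometric-decay behaviour, and (ii) the strict inequality $\varphi(j,n) \le 0$ holds \emph{uniformly} in $j$ rather than just in a neighbourhood of $j = \sqrt{n}$, since there is no amplification gap available at cutpoint $\tfrac{1}{2}$ to absorb small errors.
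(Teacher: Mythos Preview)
Your proposal contains a genuine gap at the ``balancing'' step. After the certificate phase your designated entry carries $j^{2}\alpha^{2j}$ (up to constants), and after the input phase a separate entry carries $\beta^{n}$. For the combined deviation $\varphi(j,n)$ to vanish (or change sign) precisely on the locus $n=j^{2}$, you would need a relation of the form $C\,j^{2}\alpha^{2j}=\beta^{n}$ (or some fixed algebraic combination thereof) to characterise $n=j^{2}$ over all integers $j,n\ge 0$ with \emph{fixed} $\alpha,\beta,C$. No such choice exists: taking logarithms, the left side is $2j\log\alpha+2\log j+\log C$, linear in $j$ up to a log term, whereas the right side with $n=j^{2}$ is $j^{2}\log\beta$, quadratic in $j$. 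The two can agree for at most finitely many $j$, so the diagonal $n=j^{2}$ cannot be singled out this way. The ``twin sub-automaton contributing the opposite-sign version of $\varphi$'' is also left unexplained; PFAs have nonnegative entries, so producing a sign-flipped copy is not a free move and would itself require a construction.

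The paper avoids the scaling mismatch entirely by using the \emph{same} normalisation factor $1/l$ for every symbol, certificate and input alike. It embeds integer-valued linear recurrences (a $3\times 3$ upper-triangular block computing $(1,2i+1,i^{2})$ and a $2\times 2$ block computing $(1,n)$) into a single stochastic matrix, so that after reading $\lmoon a^{i}\rmoon\cent a^{n}\dollar$ the entries $i^{2}$ and $n$ sit in the state vector with the identical scaling $l^{-(i+n+4)}$. Placing $i^{2}$ in an accepting state, $n$ in a rejecting state, and splitting the remaining mass equally gives $Acc=\tfrac{1}{2}$ iff $i^{2}=n$; Turan's standard technique then converts this ``$=\tfrac{1}{2}$ vs.\ $\neq\tfrac{1}{2}$'' separation into genuine cutpoint-$\tfrac{1}{2}$ recognition. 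The key idea you are missing is to encode $i^{2}$ and $n$ with a common normalisation so they become directly comparable, rather than trying to tune two independent decay rates to meet on a quadratic curve.
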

\begin{proof}
    The empty string is handled deterministically. For any given non-empty input $x = a^n$, the verifier expects to have a certificate $a^i$, satisfying $i^2 = n$ for the member strings of $\usquare$.

    Below is a linear operator to encode $i^2$ when started in $(1~~1~~0)^T$ and applied $i$ times:
    \begin{equation}
        \label{eq:pfa-i2}
        \underbrace{\myvector{1 \\ 2i+1 \\ i^2}}_{final} = \underbrace{\mymatrix{ccc}{1 & 0 & 0 \\ 2 & 1 & 0 \\ 0 & 1 & 1 }^i}_{operator} \underbrace{\myvector{1 \\ 1 \\ 0}}_{start}.
    \end{equation}
    Below is a linear operator to encode $n$ when started in $(1~~0)^T$ and applied $n$ times:
    \begin{equation}
        \label{eq:pfa-n}        
        \underbrace{\myvector{1 \\ n}}_{final} = \underbrace{\mymatrix{cc}{1 & 0 \\ 1 & 1}^n}_{operator} \underbrace{\myvector{1 \\ 0}}_{start}.
    \end{equation}
    Both of these linear operators are embedded into the transition matrices of $M$ with some normalization factors. Here an additional state is used to ensure that the transition matrices are stochastic and the normalization factor, say $l$, is the same for each matrix. The technical details are given below.
    
    After reading $(0^i,0^n)$, the operators in Equations \ref{eq:pfa-i2} and \ref{eq:pfa-n} are applied $i$ and $n$ times, respectively, and
    $M$ sets its final state vector as
    \[
        v_f  =  l^{i+n+4}
        \myvector{i^2 \\ n \\ a = (l^{i+n+4}-i^2-n)/2 \\ r = (l^{i+n+4}-i^2-n)/2 \\ 0 \\ \vdots \\ 0 },
    \]
    where 
    \begin{itemize}
        \item $l^{i+n+4}$ is the normalization factor after the multiplication of $(i+n+4)$ transition matrices, and,
        \item after setting the first two entries with $i^2$ and $n$, the remaining probabilities are split into equal parts and set as the values of states $s_3$ and $s_4$.
    \end{itemize}
    The accepting states are $s_1$ and $s_3$.

    If $x \in L$, then with the help of a valid certificate, the accepting and rejecting probabilities are the same, and so, $Acc_M(x) = 1/2 $. 
    
    If $x \notin L$, $i^2$ and $n$ are always different for any certificate. Thus, we have $ Acc_M(x) \neq 1/2 $.

    When using only rational numbers, the above set-up can be converted into the acceptance with cutpoint $1/2$ by using a standard technique given in  \cite{Tur71}. 
    
    Now we provide the missing details. The verifier $M$ has 11 states, and we split them into three groups: 
    \begin{itemize}
        \item The states $\{ s_1,\ldots,s_5\}$ are the main states, with which we do our probabilistic computation.
        \item The states $\{s_6,\ldots,s_{10}\} $ are used to deterministically check if the input is empty string.
        \item The state $s_{11}$ is used for making the transition matrices stochastic.
    \end{itemize}
    For a given $a^n$, $M$ reads $(i+n+4)$ symbols:
    \[
    \lmoon a^i \rmoon \cent a^n \dollar .
    \]
    The initial state vector is 
    \[ 
        v_0 = \myvector{1 \\ 0 \\ \vdots \\ 0} .
    \]
    
    The transition matrix for $\lmoon$ is
    \[
        A_{\lmoon} = 
        \dfrac{1}{4} \mymatrix{ccccc|ccccc|c}{
        1 & 0 & 0 & 0 & 0 & 0 & 0 & 0 & 0 & 0 & 0  \\
        1 & 4 & 0 & 0 & 0 & 0 & 0 & 0 & 0 & 0 & 0 \\
        0 & 0 & 4 & 0 & 0 & 0 & 0 & 0 & 0 & 0 & 0 \\
        0 & 0 & 0 & 4 & 0 & 0 & 0 & 0 & 0 & 0 & 0 \\ 
        0 & 0 & 0 & 0 & 4 &  0 & 0 & 0 & 0 & 0 & 0 \\
        \hline
        0 & 0 & 0 & 0 & 0 & 4 & 0 & 0 & 0 & 0 & 0 \\
        0 & 0 & 0 & 0 & 0 & 0 & 4 & 0 & 0 & 0 & 0 \\
        0 & 0 & 0 & 0 & 0 & 0 & 0 & 4 & 0 & 0 & 0 \\
        0 & 0 & 0 & 0 & 0 & 0 & 0 & 0 & 4 & 0 & 0 \\
        0 & 0 & 0 & 0 & 0 & 0 & 0 & 0 & 0 & 4 & 0 \\
        \hline
        2 & 0 & 0 & 0 & 0 & 0 & 0 & 0 & 0 & 0 & 4 \\
        } .
    \]
    After reading $\lmoon$, the state vector is 
    \[
        v_1 = \dfrac{1}{4}
        \myvector{1 \\ 1 \\ 0 \\ \vdots \\ 0 \\ 2 } .
    \]
    
    The transition matrix for input symbol $a$ is
    \[
        A_a = 
        \dfrac{1}{4} \mymatrix{ccccc|ccccc|c}{
        1 & 0 & 0 & 0 & 0 & 1 & 0 & 0 & 0 & 0 & 0  \\
        2 & 1 & 0 & 0 & 0 & 2 & 1 & 0 & 0 & 0 & 0 \\
        0 & 1 & 1 & 0 & 0 & 0 & 1 & 1 & 0 & 0 & 0 \\
        1 & 0 & 0 & 1 & 0 & 1 & 0 & 0 & 1 & 0 & 0 \\ 
        0 & 0 & 0 & 0 & 1 &  0 & 0 & 0 & 0 & 1 & 0 \\
        \hline
        0 & 0 & 0 & 0 & 0 & 0 & 0 & 0 & 0 & 0 & 0 \\
        0 & 0 & 0 & 0 & 0 & 0 & 0 & 0 & 0 & 0 & 0 \\
        0 & 0 & 0 & 0 & 0 & 0 & 0 & 0 & 0 & 0 & 0 \\
        0 & 0 & 0 & 0 & 0 & 0 & 0 & 0 & 0 & 0 & 0 \\
        0 & 0 & 0 & 0 & 0 & 0 & 0 & 0 & 0 & 0 & 0 \\
        \hline
        0 & 2 & 3 & 3 & 3 & 0 & 2 & 3 & 3 & 3 & 4 \\
        } .
    \]
    Here the linear operator in Equation~\ref{eq:pfa-i2} is embedded into the 1st, 2nd, and 3rd rows and columns, and, the linear operator in Equation~\ref{eq:pfa-n} is embedded into the 1st and 4th rows and columns. To use encoding, we set the initial conditions accordingly. The first three entries of $v_1$ (by omitting the normalization factor) is 
    \[
        \myvector{1 \\ 1 \\ 0}.
    \]
    Thus, if we apply $A_a$, we use the encoding given in Equation~\ref{eq:pfa-i2}. The fourth entry of $v_1$ is zero. So, the encoding given in Equation~\ref{eq:pfa-n} is not activated at the moment.
    
    After reading the proof $a^i$, as shown in Equation~\ref{eq:pfa-i2}, the state vector is
    \[
        v_{i+1} = 
        \mypar{\dfrac{1}{4}}^{i+1}
        \myvector{1 \\ 2i+1 \\ i^2 \\ 0 \\ 0 \\ \hline 0 \\ \vdots \\ 0 \\ \hline T_{i+1} },
    \]
    where $T_{i+1}$ is some value to make the vector stochastic.
    
    The transition matrix for $\rmoon$ is 
    \[
        A_{\rmoon} = 
        \dfrac{1}{4} \mymatrix{ccccc|ccccc|c}{
        1 & 0 & 0 & 0 & 0 & 0 & 0 & 0 & 0 & 0 & 0  \\
        0 & 0 & 0 & 0 & 0 & 0 & 0 & 0 & 0 & 0 & 0 \\
        0 & 0 & 0 & 0 & 0 & 0 & 0 & 0 & 0 & 0 & 0 \\
        1 & 0 & 0 & 0 & 0 & 0 & 0 & 0 & 0 & 0 & 0 \\ 
        0 & 0 & 1 & 0 & 0 &  0 & 0 & 0 & 0 & 0 & 0 \\
        \hline
        0 & 0 & 0 & 0 & 0 & 0 & 0 & 0 & 0 & 0 & 0 \\
        0 & 0 & 0 & 0 & 0 & 0 & 0 & 0 & 0 & 0 & 0 \\
        0 & 0 & 0 & 0 & 0 & 0 & 0 & 0 & 0 & 0 & 0 \\
        0 & 0 & 0 & 0 & 0 & 0 & 0 & 0 & 0 & 0 & 0 \\
        0 & 0 & 0 & 0 & 0 & 0 & 0 & 0 & 0 & 0 & 0 \\
        \hline
        2 & 4 & 3 & 4 & 4 & 4 & 4 & 4 & 4 & 4 & 4 \\
        } .
    \]
    After reading $\rmoon$, the state vector is
    \[
        v_{i+2} = \mypar{\dfrac{1}{4}}^{i+2}
        \myvector{1 \\ 0 \\ 0 \\ 1 \\ i^2 \\ \hline 0 \\ \vdots \\ 0 \\ \hline T_{i+2} },
    \]
    where $T_{i+2}$ is some value to make the vector stochastic.
    Here we set the fourth entry to 1 for activating the encoding in Equation~\ref{eq:pfa-n}. We store $i^2$ on the fifth entry to be used at the end of computation. We set the second and third entries to zero to deactivate the encoding in Equation~\ref{eq:pfa-i2}.
    
    The transition matrix for $\cent$ is 
    \[  
        A_{\cent} = 
        \dfrac{1}{4} \mymatrix{ccccc|ccccc|c}{
        0 & 0 & 0 & 0 & 0 & 0 & 0 & 0 & 0 & 0 & 0  \\
        0 & 0 & 0 & 0 & 0 & 0 & 0 & 0 & 0 & 0 & 0 \\
        0 & 0 & 0 & 0 & 0 & 0 & 0 & 0 & 0 & 0 & 0 \\
        0 & 0 & 0 & 0 & 0 & 0 & 0 & 0 & 0 & 0 & 0 \\ 
        0 & 0 & 0 & 0 & 0 &  0 & 0 & 0 & 0 & 0 & 0 \\
        \hline
        1 & 0 & 0 & 0 & 0 & 0 & 0 & 0 & 0 & 0 & 0 \\
        0 & 1 & 0 & 0 & 0 & 0 & 0 & 0 & 0 & 0 & 0 \\
        0 & 0 & 1 & 0 & 0 & 0 & 0 & 0 & 0 & 0 & 0 \\
        0 & 0 & 0 & 1 & 0 & 0 & 0 & 0 & 0 & 0 & 0 \\
        0 & 0 & 0 & 0 & 1 & 0 & 0 & 0 & 0 & 0 & 0 \\
        \hline
        3 & 3 & 3 & 3 & 3 & 4 & 4 & 4 & 4 & 4 & 4 \\
        } .
    \]
    After reading $\cent$, the state vector is
    \[
        v_{i+3} = \mypar{\dfrac{1}{4}}^{i+3}
        \myvector{0 \\ \vdots \\ 0 \\ \hline 1 \\ 0 \\ 0 \\ 1 \\ i^2 \\ \hline T_{i+3} },
    \]
    where $T_{i+3}$ is some value to make the vector stochastic. Here we transfer our ``useful'' computation so far in the second five entries.
    After $\cent$, if we read a symbol $a$, our computation so far will move back to the first five entries. Otherwise, if we immediately read $\dollar$, we make sure that the input (empty string) is rejected with probability 1 (deterministically).
    
    Suppose that $a^n$ is non-empty. After reading $a^n$, as shown in Equation~\ref{eq:pfa-n}, the state vector is 
    \[
        v_{i+n+3} = \mypar{\dfrac{1}{4}}^{i+n+3}
        \myvector{1 \\ 0 \\ 0 \\ n \\ i^2 \\ \hline   0 \\ \vdots \\ 0 \\ \hline T_{i+n+3} },
    \]
     where $T_{i+n+3}$ is some value to make the vector stochastic.
     
    The transition matrix for $\dollar$ is 
    \[
        A_{\dollar} = 
        \dfrac{1}{4} \mymatrix{ccccc|ccccc|c}{
        0 & 0 & 0 & 0 & 1 & 0 & 0 & 0 & 0 & 0 & 0  \\
        0 & 0 & 0 & 1 & 0 & 0 & 0 & 0 & 0 & 0 & 0 \\
        0 & 0 & 0 & 0 & 0 & 0 & 0 & 0 & 0 & 0 & 2 \\
        0 & 0 & 0 & 0 & 0 & 0 & 0 & 0 & 0 & 0 & 2 \\ 
        0 & 0 & 0 & 0 & 0 &  0 & 0 & 0 & 0 & 0 & 0 \\
        \hline
        0 & 0 & 0 & 0 & 0 & 1 & 0 & 0 & 0 & 0 & 0 \\
        0 & 0 & 0 & 0 & 0 & 0 & 1 & 0 & 0 & 0 & 0 \\
        0 & 0 & 0 & 0 & 0 & 0 & 0 & 1 & 0 & 0 & 0 \\
        0 & 0 & 0 & 0 & 0 & 0 & 0 & 0 & 1 & 0 & 0 \\
        0 & 0 & 0 & 0 & 0 & 0 & 0 & 0 & 0 & 1 & 0 \\
        \hline
        4& 4 & 4 & 3 & 3 & 3 & 3 & 3 & 3 & 3 & 0 \\
        } .
    \]
    If $a^n$ is empty string, the state vector is
    \[
        v_{i+4} = v_f = 
        \mypar{\dfrac{1}{4}}^{i+4}
        \myvector{0 \\ 0 \\ 2T^{i+3} \\ 2T^{i+3} \\ 0 \\ \hline 1 \\ 0 \\ 0 \\ 1 \\ i^2 \\ \hline 0}.
    \]
    It is clear that $Acc_M(\varepsilon) \neq 1/2$.
    If $a^n$ is a  non-empty string, the state vector is
    \[
        v_{i+n+4} = v_f = \mypar{\dfrac{1}{4}}^{i+n+4}
        \myvector{i^2 \\ n \\ 2T^{i+n+3} \\ 2T^{i+n+3} \\ 0 \\ \hline 0 \\ \vdots \\ 0 \\ \hline 0}.
    \]
    Thus, if $a^n$ is in the language, then $i^2 = n$ and so $Acc_M(x) = 1/2$. Otherwise, $i^2 \neq n$ and so $Acc_M(x) \neq 1/2$.
\end{proof}

In the proof above, we may also encode $i^2$ as 
\[
    \underbrace{\myvector{1 \\ i \\ i \\ i^2}}_{final} = \underbrace{\myvector{1 \\ i } \otimes \myvector{1 \\ i}}_{final}  = \underbrace{\left( \mymatrix{cc}{1 & 0 \\ 1 & 1} \otimes \mymatrix{cc}{1 & 0 \\ 1 & 1} \right)^i}_{operator} \cdot \underbrace{\myvector{1 \\ 0 } \otimes \myvector{1 \\0 }}_{start}.
\]
In the same way, we can encode $i^k$ for some $k>2$ by using more tensoring. Any unary polynomial language $\mathtt{UPOLY(k)} = \{ a^{i^k} \mid i \geq 0 \}$ is also nonstochastic \cite{Tur81}, and, we can modify the above proof for any of them in a straightforward way.

\begin{corollary}
    For any $k>2$, language $\mathtt{UPOLY(k)}$ is verified by a MA-PFA $M$ with cutpoint $\frac{1}{2}$ such that for any member string with length $n$, a unary certificate with length $\sqrt[k]{n}$ is used.
\end{corollary}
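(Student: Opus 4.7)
The plan is to adapt the proof of Theorem~\ref{thm:mapfa-usquare} by replacing the $3 \times 3$ block that encodes $i^2$ with a $2^k \times 2^k$ block that encodes $i^k$, exactly as sketched in the remark preceding the corollary. The building block is
\[
    B_k = \mymatrix{cc}{1 & 0 \\ 1 & 1}^{\otimes k},
\]
which, when applied $i$ times to the initial vector $\myvector{1 \\ 0}^{\otimes k}$, produces the vector $\myvector{1 \\ i}^{\otimes k}$; its last coordinate is precisely $i^k$. The rest of the verifier --- the two-dimensional block that counts $n$ during the input phase (unchanged from Equation~\ref{eq:pfa-n}), the five-state subroutine that deterministically handles the empty string, and the single sink state used to soak up surplus probability --- is copied verbatim from the $\usquare$ construction.

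Next I would fix a common normalization constant $l$ so that every transition matrix is stochastic. Any $l$ at least as large as the maximum row sum of $B_k$ works, and since those row sums are products of $1$'s and $2$'s, taking $l = 2^k$ is enough; the input-phase block and the bookkeeping block contribute only bounded row sums independent of $k$. With this $l$, after the full string $\lmoon a^i \rmoon \cent a^n \dollar$ has been read, the final state vector has the same structural shape as in the $\usquare$ proof, namely
\[
    v_f = \mypar{\dfrac{1}{l}}^{i+n+4}\myvector{i^k \\ n \\ (l^{i+n+4}-i^k-n)/2 \\ (l^{i+n+4}-i^k-n)/2 \\ 0 \\ \vdots \\ 0},
\]
with states $1$ and $3$ designated accepting. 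Hence $Acc_M(x) = 1/2$ iff $i^k = n$, and for every member $a^n \in \mathtt{UPOLY(k)}$ the prover supplies the certificate $a^{\sqrt[k]{n}}$ of length $\sqrt[k]{n}$, as claimed. Turakainen's technique \cite{Tur71} then upgrades the ``equals $1/2$'' condition to strict cutpoint $1/2$ recognition over the rationals.

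The hard part, as in the $\usquare$ case, is purely bookkeeping rather than conceptual. I will have to write down the transition matrices for $\rmoon$ and $\cent$ so that the entry $i^k$ sitting in the last coordinate of the $2^k$-dimensional tensor block is copied into one of the five bookkeeping states while the remaining $2^k-1$ coordinates are zeroed out, and so that every column still sums to $l$. This is a direct generalisation of what happens in coordinate $3$ of the $\usquare$ construction, the only new element being that the extracted coordinate is now indexed by $2^k-1$ instead of by $3$; the sink state absorbs the leftover probability exactly as before. Once these matrices are laid out, the end-of-computation analysis is identical to the $k = 2$ case and no new ideas are required beyond the tensor-product trick already mentioned in the remark.
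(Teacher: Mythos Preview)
Your proposal is correct and follows exactly the approach the paper intends: it applies the tensor-product encoding of $i^k$ mentioned in the remark preceding the corollary and otherwise reuses the $\usquare$ construction verbatim, which is precisely what the paper means by ``we can modify the above proof for any of them in a straightforward way.'' One tiny slip: for left-stochastic matrices you should bound the \emph{column} sums of $B_k$, not the row sums, when choosing $l$; since both maxima equal $2^k$ here this does not affect the argument.
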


We do not know whether nonregular unary language $\mathtt{UPOWER} = \{ a^{2^i} \mid i \geq 0 \}$ is nonstochastic. We use only logarithmic size certificates for $\mathtt{UPOWER}$ based on the following linear operator:
\[
    \underbrace{\myvector{1 \\ 2^i}}_{final} = \underbrace{\mymatrix{cc}{1 & 0 \\ 0 & 2}^i}_{operator} \underbrace{\myvector{1 \\ 1}}_{start} \cdot
\]

\begin{corollary}
    Language $\mathtt{UPOWER}$ is verified by a MA-PFA $M$ with cutpoint $\frac{1}{2}$ such that for any member string with length $n$, a unary certificate with length $\log n$ is used.
\end{corollary}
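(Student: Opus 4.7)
The plan is to reuse the scaffolding of the proof of Theorem~\ref{thm:mapfa-usquare}, replacing the $i^2$ encoding of Equation~\ref{eq:pfa-i2} by the two-state $2^i$ encoding displayed just above the statement of this corollary, while keeping the $n$-encoding of Equation~\ref{eq:pfa-n} for the input-reading phase. At the end of computation the final vector will carry $2^i$ in an accepting coordinate and $n$ in a rejecting coordinate, together with equal ``slack'' masses split between an accepting and a rejecting garbage state, so that $Acc_M(a^n) = \tfrac{1}{2}$ exactly when $2^i = n$. For $a^n \in \mathtt{UPOWER}$, Merlin sends $c = a^{\log_2 n}$ of length $\log n$; for $a^n \notin \mathtt{UPOWER}$, no $i$ satisfies $2^i = n$, and the acceptance probability is never $\tfrac{1}{2}$.

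Concretely, I would embed the operator $\mymatrix{cc}{1 & 0 \\ 0 & 2}$ into an upper-left $2\times 2$ block of $A_a$ (active during the certificate phase, seeded via $A_{\lmoon}$) and embed the operator $\mymatrix{cc}{1 & 0 \\ 1 & 1}$ into a disjoint pair of coordinates (dormant during the certificate phase, seeded via $A_{\rmoon}$ and shifted into the input-reading block via $A_{\cent}$). An auxiliary sink state absorbs the missing probability so that each transition matrix is left stochastic with a common normalization factor $l$; because the maximum column sum of either encoding is $2$, choosing $l = 4$ as in Theorem~\ref{thm:mapfa-usquare} leaves enough slack. A separate deterministic block of states routes the empty string to a sure-rejection branch via $A_{\dollar}$ applied immediately after $A_{\cent}$, while on a non-empty input the final $A_{\dollar}$ redirects the coordinate holding $2^i$ to an accepting state, the coordinate holding $n$ to a rejecting state, and splits the accumulated slack $T$ equally between an accepting and a rejecting garbage state.

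With this design the final vector, after absorbing the $l^{i+n+4}$ normalization, has the shape $(2^i,\ n,\ 2T,\ 2T,\ 0,\ldots,0)^T$ with the first and third coordinates accepting, giving $Acc_M(a^n) = \tfrac{1}{2}$ iff $2^i = n$. Since all matrix entries are rational, the standard Turakainen technique cited in the proof of Theorem~\ref{thm:mapfa-usquare} converts this ``$=\tfrac{1}{2}$ vs $\neq\tfrac{1}{2}$'' separation into acceptance with cutpoint $\tfrac{1}{2}$, completing the verification. The main obstacle is purely bookkeeping: one must ensure that the $n$-encoding stays dormant during the certificate phase (its seed coordinate kept at $0$), that $A_{\rmoon}$ safely parks $2^i$ while simultaneously switching on the seed for the $n$-encoding, and that every column of every transition matrix sums to exactly $l$. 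Since these are precisely the routing maneuvers already handled in detail for $\usquare$, adapting them to the simpler $2^i$ operator is routine, and the certificate length $i = \log_2 n$ matches the claimed bound.
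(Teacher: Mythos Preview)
Your proposal is correct and follows essentially the same approach as the paper: the paper does not spell out a proof for this corollary but simply displays the $2^i$ linear operator and states the result, implicitly indicating that one adapts the construction of Theorem~\ref{thm:mapfa-usquare} by swapping in this operator for the certificate phase. Your write-up fills in exactly those routing and bookkeeping details that the paper leaves implicit.
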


The above results presented for MA-PFAs can be obtained for bounded-error MA-PostQFAs. The proof is almost the same. We embed the linear operators inside the quantum operators by using normalization factors and filling the rest to ensure the matrices form valid quantum operators. One main difference is that, due to negative transitions, QFAs can subtract the values from each other, and, in this way, we can use the value of $i^2-n$, which is zero if and only if $n$ and $i^2$ are equal. Overall, this is one of the well-known programming techniques for QFAs. We refer the reader to \cite{YS10A,YS11A,Yak13A,Yak13C,Yak16A,SY17} for the details and several examples. For the sake of completeness, we provide these details in the next proof.

\begin{theorem}
    Language $\usquare \cup \{\varepsilon\}$ is verified by a bounded-error MA-PostQFA $M$ such that for any member string with length $n$, a unary certificate with length $\sqrt{n}$ is used.
\end{theorem}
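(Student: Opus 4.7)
The plan is to mimic the MA-PFA construction from Theorem~\ref{thm:mapfa-usquare}, replacing the stochastic linear operators by their quantum analogues, and exploiting two features that QFAs possess but PFAs do not: amplitudes of either sign, so that $i^2$ and $n$ can be subtracted via interference rather than merely compared against the threshold $1/2$; and postselection, so that the common normalization factor accumulated through all the transition steps can be cancelled out of the final accept/reject ratio. The empty string is handled by a deterministic branch which, on reading $\lmoon\rmoon\cent\dollar$, routes the entire amplitude into the accept basis state.

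For a nonempty input $a^n$ with unary certificate $a^i$, I would organize the Hilbert space into three parts. A ``certificate register'' uses the tensor-product trick described in the remark right after Theorem~\ref{thm:mapfa-usquare}, so that after reading $\lmoon a^i \rmoon$ one distinguished basis state carries amplitude proportional to $i^2 / l^{c_1 i}$ for a fixed normalization base $l$ and constant $c_1$. An ``input register'' is loaded while reading $\cent a^n \dollar$ so that one of its basis states carries amplitude proportional to $n / l^{c_2 n}$. Finally, a pair of ``output'' basis states designated accept and reject is populated by a Hadamard-like interference built into the $\dollar$-superoperator, writing $i^2 - n$ (up to the shared normalization) into the amplitude of the reject state and a fixed nonzero amplitude into the accept state. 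Each affine update is extended to a bona fide superoperator by adding operation elements that direct the excess amplitudes into non-postselected ``junk'' basis states, following the standard embedding technique used in \cite{YS10A,YS11A,Yak13A,Yak13C,SY17}.

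Once this is in place, an honest certificate $a^{\sqrt{n}}$ for $n \in \usquare$ produces zero reject amplitude, hence postselected acceptance probability $1$; for $n \notin \usquare$ and any certificate, the reject amplitude is proportional to the nonzero integer $i^2 - n$, so the postselected acceptance probability is at most $1 - \delta$ for some universal $\delta > 0$, which standard amplification improves into an arbitrarily small bounded error. The main obstacle is aligning the two normalization scales $l^{c_1 i}$ and $l^{c_2 n}$ so that the $\dollar$-superoperator produces a clean amplitude proportional to $i^2 - n$ rather than a ratio-dependent expression: the ``dump to junk'' embedding achieves exactly this by routing the mismatched factors into non-postselected outcomes, while simultaneously maintaining a uniform lower bound on $a_M(\hatc,\tildex) + r_M(\hatc,\tildex)$ that keeps the postselection well defined on every branch.
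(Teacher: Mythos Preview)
Your proposal is essentially the paper's approach: embed the same linear encodings of $i^2$ and $n$ used in Theorem~\ref{thm:mapfa-usquare} inside superoperators, use a negative transition to form $i^2-n$ in the amplitude of a rejecting state while keeping a fixed nonzero amplitude on an accepting state, and route all excess weight into non-postselecting states via additional operation elements. The paper does exactly this, setting the (unnormalized) postselected part of the final state to $(1,\, t(n-i^2),\,0,\ldots,0)^T$ scaled by a single factor $l^{\,i+n+4}$, so that members are accepted with probability $1$ and nonmembers are rejected with probability at least $t^2/(t^2+1)$.

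The one point where your write-up is confused is the ``main obstacle'' you describe. There is no alignment problem between scales $l^{c_1 i}$ and $l^{c_2 n}$: the certificate entry and the input entry sit in the \emph{same} state vector, and every transition multiplies the whole vector by the same normalization factor. Hence after reading $\lmoon a^i\rmoon\cent a^n\dollar$ both entries carry the common factor $l^{\,i+n+4}$ automatically, and the subtraction yields $l^{\,i+n+4}(i^2-n)$ with no further work. Your claimed fix (``routing the mismatched factors into non-postselected outcomes'') would not repair a genuine mismatch if one existed---discarding amplitude cannot rescale one entry relative to another---so it is fortunate that the obstacle is illusory. Once you drop that paragraph, your sketch matches the paper's proof.
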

\begin{proof}
    We include $\varepsilon$ as an accepting string to avoid a deterministic check and so to use more states, which makes the technical details more complicated.
    
    QFAs can have both negative and positive transition values. Thus, for a given input $x = a^n$, after reading the certificate $a^i$ and the input, the (unnormalized) final state of the verifier is set to
    \[
        l^{n+i+4} \myvector{1 \\ t(n - i^2) \\ 0 \\ \vdots \\ 0},
    \]
    where \begin{itemize}
        \item $l^{n+i+4}$ is the cumulative normalization factor coming from each individual operator for every symbol of $ \lmoon a^i \rmoon \mspace{3mu} \cent \mspace{3mu}  a^n \mspace{3mu} \dollar $, and,
        \item $t$ is a positive integer to make error bound arbitrarily close to 0.
    \end{itemize}
    All the other transitions are terminated in non-postselecting state(s). When using superoperators, we keep our ``useful'' computation by one operation element, and the rest of operation element(s) move the computation to the non-postselecting state(s).
    
    The only accepting state is $q_1$ and the only rejecting state is $q_2$. Thus, if $x \in L$, $t(n-i^2) = 0 $, and so, the input is accepted with probability 1. If $x \notin L$, then, (as $|n-i^2| \geq 1$, the input is rejected with probability at least
    \[
    Rej_M(x) = \dfrac{t^2}{t^2+1},
    \]
     which can be arbitrarily close to 1 by picking big $t$ values.

     Now we provide the technical details. The verifier $M$ has 5 main states and 5 auxiliary states. For each symbol, we define a superoperator with two or three operation elements. The first operation elements are for our computation. The others are auxiliary to make the overall operator is a superoperator. For simplicity, we set $t=2$.

     We start with the superoperator $\mathcal{E}_a = \{E_{a,1}, E_{a,2},E_{a,3}\}$:
     \[
        E_{a,1} = \dfrac{1}{3} \mymatrix{rrrrr|rrrrr}{
        1 & 0 &  0 &  0 &  0 &  0 &  0 &  0 &  0 &  0 \\
        2 & 1 &  0 &  0 &  0 &  0 &  0 &  0 &  0 &  0 \\
        0 & 1 &  1 &  0 &  0 &  0 &  0 &  0 &  0 &  0 \\
        1 & 0 &  0 &  1 &  0 &  0 &  0 &  0 &  0 &  0 \\
        0 & 0 &  0 &  0 &  1 &  0 &  0 &  0 &  0 &  0 \\
        \hline
        1 & -2 &  0 &  -1 &  0 &  0 &  0 &  0 &  0 &  0 \\
        0 & 1 &  -1 &  0 &  0 &  0 &  0 &  0 &  0 &  0 \\
        0 & 0 &  0 &  0 &  0 &  0 &  0 &  0 &  0 &  0 \\
        0 & 0 &  0 &  0 &  0 &  0 &  0 &  0 &  0 &  0 \\
        0 & 0 &  0 &  0 &  0 &  0 &  0 &  0 &  0 &  0
        }
     \]
     The top left corner is for encoding as we do in the proof of Theorem~\ref{thm:mapfa-usquare}. The bottom left part is filled with some values to ensure that the columns are pairwise orthogonal. We define $E_{a,2} $ and $ E_{a,3}$ to make $ \mathcal{E}_a $ a valid quantum operator.
     \[
        E_{a,2} = \dfrac{1}{3} \mymatrix{rrrrr|rrrrr}{
        0 & 0 &  0 &  0 &  0 &  0 &  0 &  0 &  0 &  0 \\
        0 & 0 &  0 &  0 &  0 &  0 &  0 &  0 &  0 &  0 \\
        0 & 0 &  0 &  0 &  0 &  0 &  0 &  0 &  0 &  0 \\
        0 & 0 &  0 &  0 &  0 &  0 &  0 &  0 &  0 &  0 \\
        0 & 0 &  0 &  0 &  0 &  0 &  0 &  0 &  0 &  0 \\
        \hline
        \sqrt{2} & 0 &  0 &  0 &  0 &  0 &  0 &  0 &  0 &  0 \\
        0 & \sqrt{2} &  0 &  0 &  0 &  0 &  0 &  0 &  0 &  0 \\
        0 & 0 &  \sqrt{7} &  0 &  0 &  0 &  0 &  0 &  0 &  0 \\
        0 & 0 &  0 &  \sqrt{7} &  0 &  0 &  0 &  0 &  0 &  0 \\
        0 & 0 &  0 &  0 &  \sqrt{8} &  0 &  0 &  0 &  0 &  0
        }
     \]
     \[
        E_{a,3} = \dfrac{1}{3} \mymatrix{rrrrr|rrrrr}{
        0 & 0 &  0 &  0 &  0 &  0 &  0 &  0 &  0 &  0 \\
        0 & 0 &  0 &  0 &  0 &  0 &  0 &  0 &  0 &  0 \\
        0 & 0 &  0 &  0 &  0 &  0 &  0 &  0 &  0 &  0 \\
        0 & 0 &  0 &  0 &  0 &  0 &  0 &  0 &  0 &  0 \\
        0 & 0 &  0 &  0 &  0 &  0 &  0 &  0 &  0 &  0 \\
        \hline
        0 & 0 &  0 &  0 &  0 &  3 &  0 &  0 &  0 &  0 \\
        0 & 0 &  0 &  0 &  0 &  0 &  3 &  0 &  0 &  0 \\
        0 & 0 & 0  &  0 &  0 &  0 &  0 &  3 &  0 &  0 \\
        0 & 0 &  0 &  0 &  0 &  0 &  0 &  0 &  3 &  0 \\
        0 & 0 &  0 &  0 &  0 &  0 &  0 &  0 &  0 &  3
        }
     \]
    Now it is easy to check that the columns of 
    \[
    \begin{array}{|c|} 
    \hline  E_{a,1}  \\ \hline E_{a,2} \\ \hline  E_{a,3} \\ \hline
    \end{array}
    \]
    form an orthonormal set, i.e., the columns are pairwise orthogonal and their lengths are 1. Thus, $\mathcal{E}_a$ is a valid superoperator, and it also ensures that the auxiliary states are sink (there are no transitions from the auxiliary states to the main states). 

    In the same way, we define the superoperators for the other symbols. The superoperator for symbol $\lmoon$ has two operation elements:
    \[
        E_{\lmoon,1} = \dfrac{1}{3} 
        \mymatrix{rrrrr|rrrrr}{
        1 & 0 &  0 &  0 &  0 &  0 &  0 &  0 &  0 &  0 \\
        1 & 0 &  0 &  0 &  0 &  0 &  0 &  0 &  0 &  0 \\
        0 & 0 &  0 &  0 &  0 &  0 &  0 &  0 &  0 &  0 \\
        0 & 0 &  0 &  0 &  0 &  0 &  0 &  0 &  0 &  0 \\
        0 & 0 &  0 &  0 &  0 &  0 &  0 &  0 &  0 &  0 \\
        \hline
        0 & 0 &  0 &  0 &  0 &  3 &  0 &  0 &  0 &  0 \\
        0 & 0 &  0 &  0 &  0 &  0 &  3 &  0 &  0 &  0 \\
        0 & 0 & 0  &  0 &  0 &  0 &  0 &  3 &  0 &  0 \\
        0 & 0 &  0 &  0 &  0 &  0 &  0 &  0 &  3 &  0 \\
        0 & 0 &  0 &  0 &  0 &  0 &  0 &  0 &  0 &  3
        }
    \]
    \[
        E_{\lmoon,2} = \dfrac{1}{3} 
        \mymatrix{rrrrr|rrrrr}{
        0 & 0 &  0 &  0 &  0 &  0 &  0 &  0 &  0 &  0 \\
        0 & 0 &  0 &  0 &  0 &  0 &  0 &  0 &  0 &  0 \\
        0 & 0 &  0 &  0 &  0 &  0 &  0 &  0 &  0 &  0 \\
        0 & 0 &  0 &  0 &  0 &  0 &  0 &  0 &  0 &  0 \\
        0 & 0 &  0 &  0 &  0 &  0 &  0 &  0 &  0 &  0 \\
        \hline
        \sqrt{7} & 0 &  0 &  0 &  0 &  0 &  0 &  0 &  0 &  0 \\
        0 & 3 & 0  &  0 &  0 &  0 &  0 &  0 &  0 &  0 \\
        0 & 0 & 3  &  0 &  0 &  0 &  0 &  0 &  0 &  0 \\
        0 & 0 &  0 &  3 &  0 &  0 &  0 &  0 &  0 &  0 \\
        0 & 0 &  0 &  0 &  3 &  0 &  0 &  0 &  0 &  0
        }
    \]
    The superoperator for symbol $\rmoon$ has two operation elements:
    \[
        E_{\rmoon,1} = \dfrac{1}{3} 
        \mymatrix{rrrrr|rrrrr}{
        1 & 0 &  0 &  0 &  0 &  0 &  0 &  0 &  0 &  0 \\
        0 & 0 &  0 &  0 &  0 &  0 &  0 &  0 &  0 &  0 \\
        0 & 0 &  0 &  0 &  0 &  0 &  0 &  0 &  0 &  0 \\
        1 & 0 &  0 &  0 &  0 &  0 &  0 &  0 &  0 &  0 \\
        0 & 0 &  1 &  0 &  0 &  0 &  0 &  0 &  0 &  0 \\
        \hline
        0 & 0 &  0 &  0 &  0 &  3 &  0 &  0 &  0 &  0 \\
        0 & 0 &  0 &  0 &  0 &  0 &  3 &  0 &  0 &  0 \\
        0 & 0 & 0  &  0 &  0 &  0 &  0 &  3 &  0 &  0 \\
        0 & 0 &  0 &  0 &  0 &  0 &  0 &  0 &  3 &  0 \\
        0 & 0 &  0 &  0 &  0 &  0 &  0 &  0 &  0 &  3
        }
    \]
    \[
        E_{\rmoon,2} = \dfrac{1}{3} 
        \mymatrix{rrrrr|rrrrr}{
        0 & 0 &  0 &  0 &  0 &  0 &  0 &  0 &  0 &  0 \\
        0 & 0 &  0 &  0 &  0 &  0 &  0 &  0 &  0 &  0 \\
        0 & 0 &  0 &  0 &  0 &  0 &  0 &  0 &  0 &  0 \\
        0 & 0 &  0 &  0 &  0 &  0 &  0 &  0 &  0 &  0 \\
        0 & 0 &  0 &  0 &  0 &  0 &  0 &  0 &  0 &  0 \\
        \hline
        \sqrt{7} & 0 &  0 &  0 &  0 &  0 &  0 &  0 &  0 &  0 \\
        0 & 3 & 0  &  0 &  0 &  0 &  0 &  0 &  0 &  0 \\
        0 & 0 & \sqrt{8}  &  0 &  0 &  0 &  0 &  0 &  0 &  0 \\
        0 & 0 &  0 &  3 &  0 &  0 &  0 &  0 &  0 &  0 \\
        0 & 0 &  0 &  0 &  3 &  0 &  0 &  0 &  0 &  0
        }
    \]
    The superoperator for symbol $\cent$ has two operation elements:
    \[
        E_{\cent,1} = \dfrac{1}{3}
        \mymatrix{c|c}{ I & 0 \\ \hline 0 & 3I }
    \]
    \[
        E_{\cent,2} = \dfrac{1}{3}
        \mymatrix{c|c}{ 0 & 0 \\ \hline \sqrt{8}I & 0 }
    \]
    As we do not make a deterministic check for empty string, we apply a fraction of the identity operator on the main part.
    Now we define the superoperator with two operation elements for symbol $\dollar$. Note that the encoding of $i^2$ and $n$ are on the fifth and fourth entries respectively.
    \[
        E_{\dollar,1} = \dfrac{1}{3} 
        \mymatrix{rrrrr|rrrrr}{
        1 & 0 &  0 &  0 &  0 &  0 &  0 &  0 &  0 &  0 \\
        0 & 0 &  0 &  0 &  0 &  0 &  0 &  0 &  0 &  0 \\
        0 & 0 &  0 &  0 &  0 &  0 &  0 &  0 &  0 &  0 \\
        0 & 2 &  0 &  0 &  0 &  0 &  0 &  0 &  0 &  0 \\
        0 & -2 & 0  &  0 &  0 &  0 &  0 &  0 &  0 &  0 \\
        \hline
        0 & 0 &  0 &  0 &  0 &  3 &  0 &  0 &  0 &  0 \\
        0 & 0 &  0 &  0 &  0 &  0 &  3 &  0 &  0 &  0 \\
        0 & 0 & 0  &  0 &  0 &  0 &  0 &  3 &  0 &  0 \\
        0 & 0 &  0 &  0 &  0 &  0 &  0 &  0 &  3 &  0 \\
        0 & 0 &  0 &  0 &  0 &  0 &  0 &  0 &  0 &  3
        }
    \]
    \[
        E_{\dollar,2} = \dfrac{1}{3} 
        \mymatrix{rrrrr|rrrrr}{
        0 & 0 &  0 &  0 &  0 &  0 &  0 &  0 &  0 &  0 \\
        0 & 0 &  0 &  0 &  0 &  0 &  0 &  0 &  0 &  0 \\
        0 & 0 &  0 &  0 &  0 &  0 &  0 &  0 &  0 &  0 \\
        0 & 0 &  0 &  0 &  0 &  0 &  0 &  0 &  0 &  0 \\
        0 & 0 &  0 &  0 &  0 &  0 &  0 &  0 &  0 &  0 \\
        \hline
        \sqrt{8} & 0 &  0 &  0 &  0 &  0 &  0 &  0 &  0 &  0 \\
        0 & \sqrt{5} & 0  &  0 &  0 &  0 &  0 &  0 &  0 &  0 \\
        0 & 0 & 3  &  0 &  0 &  0 &  0 &  0 &  0 &  0 \\
        0 & 0 &  0 &  3 &  0 &  0 &  0 &  0 &  0 &  0 \\
        0 & 0 &  0 &  0 &  3 &  0 &  0 &  0 &  0 &  0
        }
    \]

    The first and second states are postselecting accepting and rejecting state, respectively. The other states are non-postselecting states. Thus, we do not take into account the values of auxiliary states. After reading $ \lmoon a^i \rmoon \cent a^n \dollar $, the main part of the unnormalized final state vector is
    \[
        \mypar{\dfrac{1}{3}}^{n+i+4} 
        \myvector{1 \\ 2(n-i^2) \\ 0 \\ 0 \\ 0}.
    \]
    
    If $a^n$ is in the language, we know that $n=i^2$ and so $2(n-i^2) = 0 $. Thus, the input is accepted with probability 1. Otherwise,  $(n-i^2)^2$ takes 1 as the minimum value, and so, the input is rejected with probability at least $\dfrac{2^2}{2^2+1} = \dfrac{4}{5}$.

    We remark that for different value of $t$, it is enough to modify $\mathcal{E}_{\dollar}$. Overall, we can also get rid of irrational transition values and use only the rational numbers by using more operation elements.
\end{proof}

\begin{corollary}
    For any $k \geq 2$, language $\mathtt{UPOLY(k)}$ is verified by a bounded-error MA-PostQFA such that for any member string with length $n$, a unary certificate with length $\sqrt[k]{n}$ is used.
\end{corollary}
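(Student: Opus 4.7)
The plan is to follow the construction of the previous theorem (for $\usquare \cup \{\varepsilon\}$) essentially verbatim, but replace the $3 \times 3$ classical encoder of $i^2$ with a $k$-fold tensor encoder of $i^k$. As noted in the remark after Theorem~\ref{thm:mapfa-usquare}, applying the operator $\mymatrix{cc}{1 & 0 \\ 1 & 1}^{\otimes k}$ exactly $i$ times to the initial vector $\myvector{1 \\ 0}^{\otimes k}$ produces $\myvector{1 \\ i}^{\otimes k}$, one of whose $2^k$ entries is $i^k$. The encoding of $n$ via Equation~\ref{eq:pfa-n} remains unchanged and is activated after the certificate is read, exactly as in the $k = 2$ case.

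I would take a MA-PostQFA whose main block has $2^k + O(1)$ states: $2^k$ states hold the tensor encoding of $i^k$ built up during the certificate phase, one state accumulates $n$ during the input phase, and two distinguished states serve as the postselecting accepting and rejecting states. The remaining auxiliary states are non-postselecting sinks that absorb the leakage introduced when embedding classical matrices into superoperators. The superoperators $\mathcal{E}_{\lmoon}$, $\mathcal{E}_{a}$, $\mathcal{E}_{\rmoon}$, $\mathcal{E}_{\cent}$, and $\mathcal{E}_{\dollar}$ are assembled exactly as in the proof of the previous theorem: the first operation element carries the classical linear map (now the $k$-fold tensor power during the certificate phase, and the parallel encoder of $n$ during the input phase), scaled by a common normalisation factor $1/l$, and the remaining operation elements are filled so that the stacked matrix has orthonormal columns.

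The decisive step is the final superoperator $\mathcal{E}_{\dollar}$, whose first operation element sends the coordinate storing $i^k$ and the coordinate storing $n$ into an amplitude $t(n - i^k)$ on the rejecting state while leaving a unit amplitude on the accepting state, where $t$ is a positive integer controlling the error bound. After reading $\lmoon \mspace{3mu} a^i \mspace{3mu} \rmoon \mspace{3mu} \cent \mspace{3mu} a^n \mspace{3mu} \dollar$, the main part of the unnormalised final state is therefore proportional to a vector with $1$ in the accepting coordinate and $t(n - i^k)$ in the rejecting coordinate. If $a^n \in \mathtt{UPOLY(k)}$ with the honest certificate $a^i$, then $n = i^k$ and the postselected accepting probability is $1$; if $a^n \notin \mathtt{UPOLY(k)}$, then $(n - i^k)^2 \geq 1$ for every choice of $i$, and the postselected rejecting probability is at least $t^2/(t^2 + 1)$, which is made arbitrarily close to $1$ by choosing $t$ large.

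The only genuine obstacle is a bookkeeping one. Since the first operation element of each superoperator is a scaled classical map, its column norms fall short of $1$ by amounts that depend on $l$ and on $k$; one must supply enough auxiliary rows (populated with appropriate square roots of those deficits) and, where necessary, extra operation elements with orthogonal supports so that the stacked matrix has orthonormal columns. Both tasks are identical in spirit to those carried out in the $k = 2$ construction and introduce no new ideas; only the main dimension grows, from $5$ to $2^k + O(1)$. The certificate length for an input of length $n = i^k$ is $i = \sqrt[k]{n}$, as required.
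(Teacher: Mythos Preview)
Your proposal is correct and matches the paper's intended argument: the paper states this corollary without a separate proof, treating it as the immediate extension of the $\usquare$ MA-PostQFA construction via the $k$-fold tensor encoding of $i^k$ already introduced after Theorem~\ref{thm:mapfa-usquare}. Your write-up simply spells out the bookkeeping (state count, normalisation, auxiliary operation elements) that the paper leaves implicit.
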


\begin{corollary}
    Language $\mathtt{UPOWER}$ is verified by a bounded-error MA-PostQFA $M$ such that for any member string with length $n$, a unary certificate with length $\log n$ is used.
\end{corollary}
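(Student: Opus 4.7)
The plan is to mimic the bounded-error MA-PostQFA construction just given for $\usquare$, only swapping the $i^{2}$-encoding for the $2^{i}$-encoding displayed immediately before the corollary: $i$-fold application of $\left(\begin{smallmatrix}1 & 0 \\ 0 & 2\end{smallmatrix}\right)$ to $(1,1)^{T}$ produces $(1,\,2^{i})^{T}$. I will use five post-selected main states: two for the certificate phase (a ``1''-register and a register that doubles on every $a$) and three for the input phase (a ``1''-register, a counter that increments on every $a$, and a slot that carries the value $2^{i}$ across the $\rmoon$-boundary). All other probability mass is dumped into non-post-selected auxiliary states whose sole role is to pad the column-stacked operation elements into isometries, exactly as in the $\usquare$ proof.

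With that layout, I would define $\mathcal{E}_{\lmoon}$ to initialize the two cert registers to $(1,1)$; $\mathcal{E}_{a}$ to apply the doubling matrix on the cert registers and the standard increment $\left(\begin{smallmatrix}1 & 0 \\ 1 & 1\end{smallmatrix}\right)$ on the corresponding input registers (only one regime is ever live, because the other is held at zero by the previous end-marker); $\mathcal{E}_{\rmoon}$ to zero the cert registers while moving their values into the input-phase ``1''-slot and the carried-$2^{i}$ slot; $\mathcal{E}_{\cent}$ to act as identity on the main block; and $\mathcal{E}_{\dollar}$ to send the ``1''-slot to the accept state $q_{1}$ and to place $t\cdot(\mathrm{counter} - \mathrm{carried}\,2^{i})$ on the reject state $q_{2}$, terminating every other main coordinate. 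Take the per-step scaling to be $1/N$ with $N \geq \max(2,t)$, where $t$ is the error-boosting parameter, and fill in auxiliary rows plus extra operation elements just as in the $\usquare$ proof so that each stacked operator becomes an isometry.

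A direct bookkeeping then shows that after reading $\lmoon a^{i} \rmoon \cent a^{n} \dollar$ the main part of the unnormalized final state equals $N^{-(i+n+4)}$ times the vector with $1$ in the accept coordinate $q_{1}$, $t(n-2^{i})$ in the reject coordinate $q_{2}$, and $0$ elsewhere. Post-selection then yields $Acc_{M} = 1/(1 + t^{2}(n-2^{i})^{2})$, which is $1$ on members ($n=2^{i}$) and at most $1/(1+t^{2})$ on non-members, so the error can be driven arbitrarily close to $0$ by taking $t$ large. The main obstacle is the clash between the exponential doubling and the isometry constraint on superoperator columns: the cert-phase main matrix already has a column of Euclidean norm exactly $2$, forcing $N \geq 2$---strictly tighter than the $N = 3$ that sufficed for $\usquare$. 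This is still compatible with any boosting factor (by taking $N \geq t$), so once column-orthonormality of the padded operation elements is checked, the construction goes through and certifies every member $a^{n} \in \mathtt{UPOWER}$ with the unary certificate $a^{\log n}$.
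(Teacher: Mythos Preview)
Your proposal is correct and takes essentially the same approach as the paper, which presents this corollary without proof as an immediate adaptation of the preceding $\usquare$ construction with the doubling operator $\left(\begin{smallmatrix}1 & 0\\ 0 & 2\end{smallmatrix}\right)$ substituted for the square-encoding operator. One small remark: your comment that the constraint $N\geq 2$ is ``strictly tighter'' than the $N=3$ used for $\usquare$ is phrased backwards---the bound $N\geq 2$ is in fact \emph{looser}, since in the $\usquare$ proof the main-block columns already reach norm $\sqrt{7}$ before padding, so $N=3$ was essentially forced there as well; but this does not affect the argument.
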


\section{Verification with linear size certificates}
\label{sec:linear}

Language $\usquare$ can also be verified by bounded-error MA-PostPFA but we obtain this result with linear size certificates.

Bounded-error two-way PFAs can make unary equality check as well as a sequence of several unary equality checks \cite{Fre81,CL89,YS10B} such as
\[
    \mathtt{UEQUAL} = \{ a^n b a^n \mid n \geq 0 \}
\]
or
\[
    \mathtt{UEQUAL\mbox{-}BLOCKS} = \{ a^{n_1}b a^{n_1} a^{n_2}b a^{n_2} \cdots a^{n_k}b a^{n_k} \mid k > 1  \}
\]
or 
\[
    \mathtt{UEQUALS} = \{ a^n b a^n b a^n b \cdots b a^n b \mid n \geq 0 \}.
\]
The same checks can also be made by bounded-error PostPFAs as presented in \cite{DY18A,DY18A-arXiv}, i.e., the input is read once but we use the power of  postselection. Therefore, we present our verification protocols by reducing them to a series of equality checks. We also refer the reader to \cite{DY18A,DY18A-arXiv} for similar verification protocols by PostPFA, which are part of one-way private-coin IPSs.

\begin{theorem}
    Language $\usquare$ is verified by a bounded-error MA-PostPFA $M$ using linear size binary certificates.
\end{theorem}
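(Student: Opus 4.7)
The plan is to have Merlin supply a linear-size binary certificate that explicitly witnesses the $i \times i$ partition of $n = i^2$, and then let Arthur reduce the verification to a conjunction of bounded-error unary equality checks of the kind the paper attributes to \cite{DY18A,DY18A-arXiv}. Concretely, for a member input $x = a^n$ with $n = i^2 \geq 1$, Merlin sends the certificate $c = (a^i b)^i$, consisting of $i$ blocks of the form $a^i b$; it has length $i(i+1) = n + \sqrt{n}$, which is linear in $n$, and it uses only the binary alphabet $\{a,b\}$. The case $n = 0$ is handled deterministically on the right end-marker.

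After scanning $c$ and then $x$, Arthur runs three probabilistic subroutines in parallel, each a bounded-error postselection-based equality check: (i) all maximal $a$-runs inside $c$ have the same length $r$, i.e., $c$ has the form $(a^r b)^k$ for some $r,k \geq 1$; (ii) the number $k$ of $b$'s in $c$ equals the common block length $r$; (iii) the total number of $a$'s in $c$ equals $|x| = n$. When Merlin is honest and $n = i^2$, all three checks succeed with probability $1$. For the converse, if some certificate $c$ makes (i)--(iii) hold on some input $x$, then $c = (a^r b)^k$ with $k = r$ and $kr = n$, so $n = k^2$ and $x \in \usquare$. Hence for $x \notin \usquare$, no certificate satisfies all three conditions and at least one subroutine must reject.

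To assemble a single MA-PostPFA, I would split the initial probability uniformly across three parallel tracks, one per subroutine, and postselect on the union of their accepting and rejecting outcomes; running a constant number of independent copies of each track then amplifies the one-sided error below any desired $\epsilon < \frac{1}{2}$. The main technical obstacle is subroutine (ii), since $k$ is a global count over all of $c$ while $r$ is the length of a single block. I would handle this with two parallel postselecting counters in the spirit of the $\mathtt{UEQUAL}$ protocol of \cite{DY18A,DY18A-arXiv}: one increments on each $b$ in $c$, the other increments on each $a$ inside a nondeterministically chosen single block, and a postselected equality measurement at the end of $c$ yields the comparison $k \stackrel{?}{=} r$. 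Soundness of (ii) rests on (i): once (i) forces every $a$-block of $c$ to share a common length, the nondeterministic sampling of one block is equivalent to measuring the true $r$, so a cheating Merlin cannot simultaneously pass (i) and fake (ii).
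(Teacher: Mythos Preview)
Your proposal is correct and follows essentially the same approach as the paper: the certificate $(a^i b)^i$ and the decomposition into the three checks (all blocks equal, block length equals number of $b$'s, total $a$'s equals $n$) match the paper's protocol exactly. The only needless complication is your handling of (ii) via a ``nondeterministically chosen'' block; the paper simply compares the \emph{first} block length $t_1$ with the number $k$ of $b$'s, which avoids any sampling and makes the soundness argument for (ii) independent of (i).
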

\begin{proof}
    Trivial checks are made deterministically.
    Let $x = a^n$ be a given input. If $x $ is a member string, then there exists $i$ satisfying that $n = i^2$, and the expected certificate is
    \[
        a^i \underbrace{b}_{1st} a^i \underbrace{b}_{2nd} \cdots a^i \underbrace{b}_{ith}.
    \]
    In general, $M$ reads the certificate $c_x$ and the input $x$ as
    \[
    \lmoon a^{t_1} b a^{t_2} b \cdots a^{t_k} b \rmoon \cent a^n \dollar
    \]
    for some $k \geq 1$
    Then, $M$ makes the following checks:
    \begin{itemize}
        \item $t_1 = t_2$ and $ t_3 = t_4 $ and so on.
        \item $t_2=t_3$ and $t_4=t_5$ and so on.
        \item $t_1 = k$ ($k$ is the number of $b$s).
        \item $ t_1+t_2+\cdots+t_k = n$.
    \end{itemize}
    For the member string $x$, the certificate $c_x$ is provided as expected, and so, the input is accepted with high probability. For the non-member string $x$, at least one check must fail, and so, the input is rejected with high probability.
\end{proof}

For the input $a^{i^3}$, we can use some certificates of the form
\[
    \underbrace{ \underbrace{a^i b a^i b \cdots a^i b}_{\#b = i} ~ c ~ \underbrace{a^i b a^i b \cdots a^i b}_{\#b = i} ~ c ~ \cdots ~c~ \underbrace{a^i b a^i b \cdots a^i b}_{\#b = i} ~c }_{\#c = i}.
\]
Thus, we have $i^3$ $a$s in total. For simplicity, we use separators $b$ and $c$, but, the whole certificate be binary by keeping linear size. We can generalize this certificate form for the input $a^{i^k}$.

\begin{theorem}
    Language $\mathtt{UPOLY(k)}$ are verified by bounded-error MA-PostPFAs using linear size binary certificates.
\end{theorem}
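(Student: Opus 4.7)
The plan is to generalize the nested-block certificate construction illustrated for $k=3$ just above to arbitrary $k$. For a member string $a^n$ with $n = i^k$, the honest certificate has $k-1$ levels of nesting: at the innermost level, blocks of $a^i$; and a level-$j$ block (for $j = 1, \ldots, k-1$) consists of exactly $i$ level-$(j-1)$ blocks, each terminated by a level-$j$ separator. Using $k-1$ separator symbols $b_1, \ldots, b_{k-1}$, the certificate length is $i^k + \sum_{j=1}^{k-1} i^{k-j} = O(i^k) = O(n)$. Since $k$ is a fixed constant, each $b_j$ can be encoded as a fixed-length binary codeword (e.g., a prefix-free codeword of length $\lceil \log_2(k-1) \rceil$ over $\{0,1\}$), which only inflates the certificate by a multiplicative constant and keeps the total length linear in $n$ over a binary alphabet.

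The verifier $M$ then executes, through parallel postselected branches, a family of bounded-error unary equality checks of the kind shown in \cite{DY18A,DY18A-arXiv} to be realizable by PostPFAs in a single left-to-right scan. On a certificate parsed into nested blocks, $M$ checks that (i) all innermost $a$-blocks have a common length $t$; (ii) for every level $j \geq 1$, every level-$j$ block contains the same number of level-$(j-1)$ subblocks, and that common number equals $t$; and (iii) the total number of $a$s in the certificate equals $n$. A straightforward induction on $j$ shows that if all these equalities hold, then each level-$j$ block contains exactly $t^j$ $a$s, so the entire certificate has $t^k$ $a$s, forcing $n = t^k$ and $x \in \mathtt{UPOLY(k)}$. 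Conversely, for $x = a^n \notin \mathtt{UPOLY(k)}$ no certificate can satisfy all of these equalities simultaneously, so at least one branch rejects with high probability; combining branches with appropriate weights and amplifying as in the $\usquare$ proof keeps the total error bounded away from $1/2$.

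The main obstacle is that the verifier must keep track of its current nesting level and interleave comparisons at different levels while reading the certificate and the input each exactly once. This is handled by the observation that at any moment the PostPFA needs only a constant-size finite state (recording which of the $k-1$ separator types was most recently seen, and which of the finitely many postselected branches it is running), and that every cross-block comparison it must perform is between adjacent blocks at a single level, which reduces exactly to the primitive unary-equality check available from \cite{DY18A,DY18A-arXiv}. The extension from $k=2$ (already handled by the previous theorem) and $k=3$ (the construction sketched immediately before this theorem) to general $k$ is then syntactic: add one more separator symbol and one more outer checking layer per additional level of nesting.
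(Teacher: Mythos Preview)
Your proposal is correct and follows exactly the approach the paper sketches: the paper gives no detailed proof of this theorem, only exhibiting the nested-block certificate for $k=3$ and remarking that the form generalizes to $a^{i^k}$, so your write-up is in fact a more complete version of the same idea. The checks you list (equal innermost block lengths, equal subblock counts at each level matching that length, and total $a$-count equal to $n$) are precisely the natural extension of the $k=2$ checks in the preceding $\usquare$ proof combined with the nested separators from the $k=3$ sketch.
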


For the input $a^{2^i}$, we can use some certificates of the form
\[
    a^{2^0} b a^{2^1} b \cdots ba^{2^{i-1}},
\]
where the number of $a$s is $1+2+\cdots+2^{i-1} = 2^{i} - 1 $. This time, $b$s are only the separator, and we do not count their numbers. Remark that, for a given certificate,
\[
    a^{t_1}ba^{t_2}b \cdots b a^{t_k},
\]
we do the following equality checks: 
\begin{itemize}
    \item $t_1 = 1$, and,
    \item $t_j = \frac{t_{j+1}}{2}$ for each $j=1,\ldots,k-1$.
\end{itemize}
Here we deterministically check if each of $t_2,\ldots,t_k$ is even for $k>1$, and then, for each $\frac{t_{j+1}}{2}$ mentioned above, we count a pair of $a$s as a single $a$.

\begin{theorem}
    Language $\mathtt{UPOWER}$ is verified by a bounded-error MA-PostPFA using linear size binary certificates.
\end{theorem}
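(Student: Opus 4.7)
My plan is to follow the same template as the two preceding bounded-error MA-PostPFA theorems and reduce verification of $\mathtt{UPOWER}$ to a short sequence of unary equality checks, each of which a bounded-error PostPFA can perform in one left-to-right pass by the results of \cite{DY18A,DY18A-arXiv,YS10B,Fre81,CL89}. The certificate format is the one displayed just before the theorem, namely $\lmoon a^{t_1} b a^{t_2} b \cdots b a^{t_k} \rmoon$ for some $k \geq 1$; in the honest case for $x = a^{2^i}$ the prover sets $k = i$ and $t_j = 2^{j-1}$, so the certificate has $2^i - 1$ $a$-symbols plus $i - 1$ separators, which is linear in $n = 2^i$ as required.

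On the joint string $\lmoon c \rmoon \cent x \dollar$, Arthur would then perform in parallel the following checks: (a) a deterministic syntactic test, together with a deterministic parity check that $t_2, \ldots, t_k$ are all even (each tracked modulo $2$ per block in the finite control, rejecting otherwise); (b) a deterministic check $t_1 = 1$, plus special-casing the tiny inputs $x = \varepsilon$ and $x = a$; (c) for each $1 \leq j < k$, a unary equality test $t_j = t_{j+1}/2$, implemented by matching block $j$ one-$a$-per-tick against block $j+1$ two-$a$s-per-tick (well defined thanks to (a)); and (d) a closing equality test $2 t_k = n$, matching the last certificate block two-per-tick against the input portion $a^n$. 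Completeness is immediate: on the canonical certificate every test is a genuine equality and Arthur accepts with probability arbitrarily close to $1$ on the postselected branch. For soundness, if every test passes with the required probability then $t_1 = 1$ together with $t_j = 2 t_{j-1}$ forces $t_j = 2^{j-1}$ inductively, and the closing check forces $n = 2 t_k = 2^k$, so $x \in \mathtt{UPOWER}$. The overall bounded error is obtained by the standard parallel-composition amplification used in the cited PostPFA work, exactly in the style of $\mathtt{UEQUAL\mbox{-}BLOCKS}$.

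The main obstacle is carrying out each doubling test $t_j = t_{j+1}/2$ in the forward-only, single-pass postselecting setting without rewinding. The device supplied by the sentence immediately preceding the theorem resolves it: because block $j+1$ has already been certified to have even length by (a), the equality subroutine can advance its internal tick only on every second $a$ of that block, producing exactly $t_{j+1}/2$ ticks to pair against the $t_j$ ticks from block $j$. Once this primitive is available, the remainder of the construction is a routine assembly of bounded-error PostPFA subroutines in exactly the style of the previous two theorems.
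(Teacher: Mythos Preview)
Your proposal is correct and follows essentially the same approach as the paper: the same certificate format $a^{2^0}b a^{2^1}b\cdots b a^{2^{i-1}}$, the same deterministic parity test on $t_2,\ldots,t_k$, the same equality checks $t_1=1$ and $t_j=t_{j+1}/2$ implemented by counting pairs of $a$'s as single $a$'s, and a final comparison against the input length. Your write-up is in fact slightly more explicit than the paper's own sketch, since you spell out the closing check $2t_k=n$ and the handling of the trivial inputs, but the underlying construction is identical.
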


MA-PostQFAs using linear size binary certificates are very powerful. They can indeed verify every unary language with bounded error. In \cite{SY17}, it was shown that two-way QFAs (with classical head) can verify any unary languages with exponential expected time where the QFA verifier is a part of AM proof systems. We adopt the given proof in \cite{SY17} for our case.\footnote{It was also adopted for affine automata verifiers as a part of AM proof systems in \cite{KY21A}.} 

Let $L \subseteq \{a\}^*$ be a unary language. We can put all unary stings in lexicographical order as: 
\[
    \begin{array}{lcl}
       \Sigma^*[1]  & = & a^0  \\
       \Sigma^*[2]  & = & a^1  \\
       \Sigma^*[3]  & = & a^2  \\
       \vdots & \vdots & \vdots \\
       \Sigma^*[i+1]  & = & a^i  \\
       \vdots & \vdots & \vdots
    \end{array}
\]
We define a function $F_L$ as
\[
    F_L(i) = \left\lbrace
    \begin{array}{cc}
        0 & \mbox{if } \Sigma^*[i] \notin L \\
        1  & \mbox{if } \Sigma^*[i] \in L
    \end{array}
    \right . .
\]
Then, we define a real number to store all membership information of $L$ as
\[
    \alpha_L = \sum_{i=1}^\infty \dfrac{F_L(i)}{4^i} = \dfrac{F_L(1)}{4} +  \dfrac{F_L(2)}{4^2} +  \dfrac{F_L(3)}{4^3} + \cdots .
\]
If $L$ is the empty language, then $\alpha_L = 0$, and if $L = \Sigma^* $, then $\alpha_L = \frac{1}{3}$. For any other $L$, $\alpha_L$ takes a value in between.

\begin{theorem}
    \label{thm:q-unary}
    Any unary language $L \subseteq \{a\}^*$ is verified by a bounded-error MA-PostQFA $M$ using linear size binary certificates.
\end{theorem}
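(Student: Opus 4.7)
The plan is to adapt the 2QFA-based AM protocol for arbitrary unary languages from \cite{SY17} to the one-way MA-PostQFA setting. The crucial observation is that the entire membership function of $L$ can be collapsed into a single real constant $\alpha_L = \sum_{i \geq 1} F_L(i) 4^{-i}$, which is hard-coded into the verifier's transition amplitudes. On input $a^n$, the $(n+1)$-st base-$4$ digit of $\alpha_L$ equals $F_L(n+1) \in \{0,1\}$, i.e.\ the bit telling whether $a^n \in L$. Merlin's job is therefore to convince the verifier of this single digit by providing, on input $a^n$, the binary certificate $c = c_1 c_2 \cdots c_{n+1}$ where $c_i$ is the claimed value of $F_L(i)$; the last symbol $c_{n+1}$ is the \emph{answer bit}, and the length $n+1$ gives the desired linear bound.

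The verifier performs, in superposition, a \emph{consistency test} that $c$ coincides with the first $n+1$ base-$4$ digits of $\alpha_L$, together with a \emph{decision test} that merely routes $c_{n+1}$ toward the accept/reject register. The consistency test is built as a cascade of $n+1$ ``zoom-in'' steps, one per certificate symbol. A constant-size register is first prepared while reading $\lmoon$ so that its amplitudes depend on $\alpha_L$ through the hard-coded rotation. Each subsequent certificate symbol $c_i$ then triggers a fixed superoperator that, on its postselected branch, effectively multiplies the encoded phase by $4$ and subtracts $2\pi c_i$; the non-postselected branch absorbs the components that would otherwise violate unitarity. After $\rmoon$ the register carries an amplitude that stays bounded by $\sin(2\pi/3)$ when $c$ is the honest prefix of the base-$4$ expansion of $\alpha_L$, and is bounded away from $0$ by a universal (i.e.\ $n$-independent) constant whenever $c$ disagrees with $\alpha_L$ in any of its first $n+1$ digits. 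The remaining symbols $\cent a^n \dollar$ are then used only to combine the consistency amplitude with $c_{n+1}$ into the postselecting accepting and rejecting states, with the effect that honest $c$ and $c_{n+1}=1$ give acceptance, honest $c$ and $c_{n+1}=0$ give rejection, and dishonest $c$ gives rejection regardless of $c_{n+1}$.

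The main obstacle will be realizing the ``zoom-in'' step as a legitimate superoperator on a constant-dimensional Hilbert space: multiplication of an encoded angle by $4$ is not unitary, so it has to be implemented by a small superoperator whose discarded branch soaks up the non-unitary amplitude and whose kept branch carries the scaled phase. The bookkeeping — writing each transition as a block matrix with hard-coded $\alpha_L$-dependent entries and padded rows that feed into non-postselecting states, and then checking $\sum_j E_j^\dagger E_j = I$ — is essentially the same as in the proof of Theorem~\ref{thm:mapfa-usquare} and its MA-PostQFA counterpart. Soundness against cheating Merlins reduces, as in \cite{SY17}, to the claim that the consistency amplitude is bounded below by an $n$-independent constant for every $c$ that disagrees with $\alpha_L$ in its first $n+1$ digits; this should follow by a short induction on the length of the certificate, after which the standard scale-factor trick used earlier in the paper brings the error below any prescribed $\epsilon < 1/2$.
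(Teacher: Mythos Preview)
Your high-level plan---hard-code $\alpha_L$, have Merlin supply the first $n{+}1$ base-$4$ digits, zoom in digit by digit, and read off the answer from the last bit---matches the paper. But there is a genuine soundness gap, and a separate implementation issue.

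\textbf{Missing length check.} You say the input symbols $\cent a^n\dollar$ are used ``only to combine the consistency amplitude with $c_{n+1}$,'' and your soundness claim is phrased only for certificates ``that disagree with $\alpha_L$ in their first $n{+}1$ digits.'' Nothing in your verifier ties the certificate length to $n$. A cheating Merlin, on input $a^n\notin L$, simply picks any $m$ with $a^m\in L$ and sends the honest length-$(m{+}1)$ certificate $F_L(1)\cdots F_L(m{+}1)$. Your consistency test passes (the digits are correct), the last bit is $1$, and your verifier accepts. The paper closes this hole by encoding the certificate length $k$ in one amplitude while reading $c$, encoding $n{+}1$ in another amplitude while reading the input, and placing $2(k-n-1)$ on a rejecting state; any wrong-length certificate then forces rejection probability at least $4/5$.

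\textbf{The rotation/phase encoding.} Your consistency test stores $\alpha_L$ ``through the hard-coded rotation'' and each step ``multiplies the encoded phase by $4$.'' If you literally mean a state of the form $\cos\theta\ket{0}+\sin\theta\ket{1}$ with $\theta$ proportional to the current residual, then the map $\theta\mapsto 4\theta$ is not linear on the Hilbert space (e.g.\ $\ket{0}$ and $\ket{1}$ would both have to map to $\ket{0}$, yet $(\ket{0}+\ket{1})/\sqrt{2}$ would have to map to $-\ket{0}$), so no Kraus operator---even with postselection on a single branch---realises it. The paper avoids this entirely: it stores the residual $\alpha_k$ \emph{directly as an amplitude} (up to a global normalisation factor), so the zoom-in step is the genuinely linear map $\alpha\mapsto 4\alpha-c_i$, easily embedded in one operation element while the other elements dump the excess into non-postselecting states. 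With this encoding the analysis is arithmetic rather than trigonometric: honest digits keep $\alpha_k\in[0,1/3]$, a single wrong digit forces $|\alpha_k|\ge 1$, and the final postselected triple $(1,\,2\alpha_k,\,2(k-n-1))$ immediately gives the bounds $9/13$ and $4/5$. Replacing your rotation sketch with this linear-amplitude encoding, and adding the length comparison, yields exactly the paper's proof.
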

\begin{proof}
    For simplicity we focus on only the main part of state vector, and we omit the normalization factors as well as the auxiliary entries.

    After reading symbol $\lmoon$, the state vector is set to
    \[
        \myvector{1 \\ \alpha_L \\ 0 \\ \vdots \\ 0 }.
    \]
     Let $x = a^n$ be the given input. The expected certificate for $x$ is as
    \[
    F_L(1) F_L(2) \cdots F_L(n) F_L(n+1).
    \]
    While reading the provided certificate with length $k$, say $c_x$, the verifier $M$ compares the certificate digits with the digits of $\alpha_L$:
    \begin{itemize}
        \item After reading the first digit of $c_x$, $M$ updates the 2nd entry with $4\alpha_L - F_L(1)$, say $\alpha_1$.
        \item After reading the second digit of $c_x$, $M$ updates the 2nd entry with $4\alpha_1 - F_L(2)$, say $\alpha_2$.
        \item And so on. 
    \end{itemize}
    As long as the digits are correct, the integer part of $\alpha_i$s is 0. Otherwise, it will be non-zero.

    The verifier also compares the length of $c_x$ with the input length $x$, i.e., $ |c_x| \overset{?}{=} |x| + 1  $. Thus, while reading $c_x$, its length is encoded into the 3rd entry.

    After reading the certificate, the first three entries of state vector is 
    \[
        \myvector{1 \\ \alpha_k \\ k } .
    \]
    If the last digit of certificate is 0, then the input is rejected with probability 1. Remark that this is not possible when $a^n \in L$, as the prover provides a valid certificate. 

    In the remaining part, we assume that the last digit of certificate is 1. After reading $\cent a^n$, the value of $n+1$ is encoded on the fourth entry and so the first four entries of state vector is
    \[
        \myvector{1 \\ \alpha_k \\ k \\ n+1 } .
    \]
    
    After reading $\dollar$, the first three entries of state vector is
    \[
        \myvector{1 \\ 2\alpha_k \\ 2(k - n - 1) }.
    \]
    Here the first one is the accepting state and the second and third ones are rejecting states.

    If $a^n$ is in the language, $2\alpha_t$ is in $[0,2/3]$ and $(k-n-1) = 0$. Then, the accepting probability is at least
    \[
        \dfrac{1}{1+\mypar{2/3}^2} = \dfrac{9}{13} > 0.69 .   
    \]
    If $a^n$ is not in the language, then the certificate can be shorter or longer than expected.  That is $|2(k-n-1)| \geq 2$. Thus, regardless of the value of $2\alpha_k$, the input is rejected with probability at least
    \[
        \dfrac{2^2}{1+2^2} = \dfrac{4}{5} = 0.8 .
    \]
    If the certificate length is as expected, then $|2\alpha_k| \geq 2$. So, similarly, the input is rejected with probability at least 
    \[
        \dfrac{4}{5} = 0.8.
    \]
    
    The probabilities can be amplified in different ways. One trivial way is to execute several copies of the verifiers and make a postselection on all accepting and all rejecting states (the outcomes with some accepting and some rejecting states are discarded). 
\end{proof}

We obtain the same result also for binary languages but using exponential size certificate, presented in the next section. Regarding binary languages and linear size certificates, we present a bounded-error MA-PostQFA for NP-complete problem $\mathtt{SUBSETSUM}$:
\[
    \mathtt{SUBSETSUM} = \{ S \# b_1 \# \cdots 
 \# b_n \mid \exists I \subseteq \{1,\ldots,n\} \mbox{ and } S = \sum_{i \in I} b_i  \},
\]
where $S$ and $b_1,\ldots,b_n$ are positive binary numbers, each of which starts with 1.

\begin{theorem}
    \label{thm:subsetsum}
    Language $\mathtt{SUBSETSUM}$ is verified by a bounded-error MA-PostQFA $M$ using linear size binary certificate.
\end{theorem}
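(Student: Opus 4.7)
The plan is to let Merlin provide a certificate that pairs each subset indicator bit with a claimed copy of the corresponding input block, namely
\[
c_1 \# \overline{b_1} \# c_2 \# \overline{b_2} \# \cdots \# c_n \# \overline{b_n},
\]
where $c_i \in \{0,1\}$ claims whether $i \in I$ and each $\overline{b_i}$ is a claimed duplicate of $b_i$. The total length is $O(|x|)$, which is linear in the input. The verifier must then confirm two things: that each $\overline{b_i}$ coincides with the actual $b_i$ in the input, and that $\sum_{i=1}^n c_i \cdot V(\overline{b_i}) = V(S)$, where $V(\cdot)$ denotes the numerical value of a binary string.

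I would extend the amplitude-arithmetic technique of Theorems~\ref{thm:mapfa-usquare} and \ref{thm:q-unary}. While scanning the certificate, the MA-PostQFA would maintain, in distinct coordinates of its state vector, two numerical quantities: an accumulator $T$ updated by a small internal \emph{include/skip} flag, which is reset by each incoming $c_i$ and, while in \emph{include} mode, performs the linear update $T \leftarrow 2T + \overline{b_{i,j}}$ as each bit of $\overline{b_i}$ is read; and a fixed-base fractional encoding $\mu_c$ of the concatenated block $\overline{b_1} \# \cdots \# \overline{b_n}$ using a base $r$ strictly greater than the alphabet so that equality of encodings really captures symbol-wise equality. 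During the input phase the verifier similarly builds $V(S)$ from the bits of $S$ and the base-$r$ encoding $\mu_i$ of $b_1 \# \cdots \# b_n$. All four quantities live in parallel entries of the \textquotedblleft main part\textquotedblright{} of the state vector, padded out by auxiliary entries and renormalized with a common per-symbol factor exactly as in the proof of Theorem~\ref{thm:q-unary}.

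The final superoperator on $\dollar$ then uses the QFA's ability to form linear combinations with negative coefficients to produce a state whose unique accepting amplitude is a constant and whose two rejecting amplitudes are $t(T - V(S))$ and $t(\mu_c - \mu_i)$ for an integer weight $t$. An honest certificate makes both differences zero, so the only surviving postselected amplitude is the accepting one and the input is accepted with probability $1$. Any form of cheating \textemdash{} either a wrong subset ($T \neq V(S)$) or a tampered duplicate ($\mu_c \neq \mu_i$) \textemdash{} forces at least one difference to have absolute value at least the quantum of its encoding, so the postselected rejection probability is bounded below by $t^2/(t^2+1)$, which tends to $1$ as $t\to\infty$. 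Standard amplification by running parallel copies with combined postselection then achieves any desired error bound.

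The main obstacle is the bookkeeping: packing four simultaneous amplitudes plus the include/skip control into a fixed-dimensional quantum state space and designing per-symbol superoperators that (i) share a single normalization factor, (ii) update only the intended main-part coordinates, and (iii) have stacked operation elements whose columns form an orthonormal set. In particular, the conditional update of $T$ only during \emph{include} mode must be realized by making the classical control flag one of the basis states (so that the update of $T$ is a genuine linear map on the enlarged state), and the base $r$ must be chosen large enough that no carries arise between symbol positions when comparing $\mu_c$ with $\mu_i$. These are exactly the flavor of calculations worked out in detail in Theorems~\ref{thm:mapfa-usquare} and \ref{thm:q-unary}, and are routine once the coordinate layout is fixed.
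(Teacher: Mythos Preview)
Your overall strategy matches the paper's: Merlin hands over a linear-size certificate that duplicates the input blocks together with selection markers, and the verifier runs two amplitude-encoded checks in parallel --- one that the duplicated blocks agree symbol-for-symbol with the input (your $\mu_c$ versus $\mu_i$; the paper's base-$4$ encoding of certificate versus input), and one that the selected values sum to $S$. The paper's certificate format is slightly leaner (it marks a selected $b_i$ by doubling the preceding $\#$ rather than inserting an explicit bit $c_i$), but that difference is cosmetic.

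There is, however, a concrete bug in your accumulator. The rule $T \leftarrow 2T + \overline{b_{i,j}}$, applied across several included blocks, does not compute $\sum_{i\in I} V(\overline{b_i})$; it computes the value of the \emph{concatenation} of the included bit-strings. For instance, with $\overline{b_1}=11$ and $\overline{b_2}=10$ both included, your $T$ ends at $1110_2 = 14$, not $3+2=5$, so an honest certificate would be rejected. The fix is to carry a second main-part coordinate $c$ holding the value of the current block: on each bit do $c \leftarrow 2c + \overline{b_{i,j}}$, and at each block separator (when the include flag is set) do the linear move $T \leftarrow T + c$, $c \leftarrow 0$. Both updates are genuinely linear on the enlarged state vector, so they fit the same superoperator framework you already invoke; the paper's proof implicitly relies on exactly this two-register pattern when it speaks of encoding ``the summation of the selected $b_i$s in base-$2$''. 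With that correction your argument goes through and is essentially the paper's.
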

\begin{proof}
    The formats of input and certificate are checked deterministically, and the input is rejected with probability 1 if any defect is detected.

    For a given input $x \in \{0,1,\#\}$, the expected certificate is the input itself but with the indicators showing which $b_i$s are selected. We use double $\#$s if the successor $b_i$ is selected.

    The verifier $M$ encodes the certificate (by replacing double $\#$ with single $\#$) and input in base-4, and then compare them by subtracting. The result is 0 if they are identical, and the absolute value of result is at least $d_1 \geq 1$, otherwise. We use the following linear operator for encoding 0, 1, and $\#$ when starting in $(1~~0)^T$:
    \[
        A_0 = \mymatrix{cc}{1 & 0 \\ 1 & 4} , ~~
        A_1 = \mymatrix{cc}{1 & 0 \\ 2 & 4}
        ~\mbox{ and }~
        A_{\#} = \mymatrix{cc}{1 & 0 \\ 3 & 4},
    \]
    where $0$ is encoded as 1, $1$ is encoded as 2, and $\#$ is encoded as 3. Then, for example, the encoding of $10\#101$ is
    \[
        A_1 A_0 A_1 A_{\#} A_0 A_1 \myvector{1 \\ 0}= 
        \myvector{ 0 \\
        2\cdot 2^5 + 1\cdot 2^4 + 3 \cdot 2^3 + 2\cdot 2^2 + 1\cdot 2^1 + 2 }  . 
    \]
    The verifier $M$ also encodes $S$ and the summation of the selected $b_i$s in base-2, by using the following linear operators when started in $(1~~0)^T$:
    \[
        A_0 = \mymatrix{cc}{1 & 0 \\ 0 & 2} , ~~
        ~\mbox{ and }~
        A_1 = \mymatrix{cc}{1 & 0 \\ 1 & 2}        
    \]
    Then, the results are compared by subtracting. The result is 0 if there is indeed a correct selection of $b_i$s, and the absolute value of result is at least $d_2 \geq 1$ if there is no such subset.

    Thus, if the input is a member string, then it is accepted with probability 1. Otherwise, the input is rejected with high probability, i.e., at the end, the value of rejecting states are set to $ (t \cdot d_1) $ and $(t \cdot d_2)$ for some integer $t>1$. 
\end{proof}

\section{Verification with exponential size certificates}
\label{sec:exp}

Similar to verifying every unary language, we can verify every binary languages by MA-PostQFAs but using exponential size certificates. We adopt a proof again given in \cite{SY17}, in which it was shown that two-way QFAs (with classical head) can verify any binary languages with double exponential expected time.

Let $L \subseteq \{a,b\}^*$ be a binary language. We can put all binary stings in lexicographical order as 
\[
    \begin{array}{lcl}
       \Sigma^*[1]  & = & \varepsilon  \\
       \Sigma^*[2]  & = & a  \\
       \Sigma^*[3]  & = & b  \\
       \Sigma^*[4]  & = & ab  \\
       \Sigma^*[5]  & = & ba  \\
       \Sigma^*[6]  & = & bb  \\
       \Sigma^*[7]  & = & aaa  \\
       \vdots & \vdots & \vdots
    \end{array}
\]
We use the same function
\[
    F_L(i) = \left\lbrace
    \begin{array}{cc}
        0 & \mbox{if } \Sigma^*[i] \notin L \\
        1  & \mbox{if } \Sigma^*[i] \in L
    \end{array}
    \right .
\]
and the same format real number storing all membership information of $L$
\[
    \alpha_L = \sum_{i=1}^\infty \dfrac{F_L(i)}{4^i} = \dfrac{F_L(1)}{4} +  \dfrac{F_L(2)}{4^2} +  \dfrac{F_L(3)}{4^3} + \cdots .
\]

\begin{theorem}
    \label{thm:q-binary}
    Any binary language $L \subseteq \{a,b\}^*$ is verified by a bounded-error MA-PostQFA $M$ using exponential size binary certificates.
\end{theorem}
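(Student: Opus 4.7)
The plan is to adapt the proof of Theorem \ref{thm:q-unary} almost verbatim, substituting the linear counter $n+1$ with the lexicographic index $p(x)$ of the input, which is exponential in $|x|$. As before, after reading $\lmoon$ the verifier $M$ sets one main entry to the constant $1$, a second entry to $\alpha_L$, and keeps a third entry as a certificate-length counter initialized to $0$. While scanning the certificate bit by bit, $M$ performs base-$4$ digit-stripping on the $\alpha$-entry (updating $\alpha_{k-1} \mapsto 4\alpha_{k-1} - d_k$ on certificate digit $d_k$) and increments the certificate-length counter, exactly as in Theorem \ref{thm:q-unary}. The prover is expected to send $F_L(1) F_L(2) \cdots F_L(p(x))$, with the last bit equal to $1$ iff $x \in L$; $M$ forces rejection with probability $1$ when the final certificate bit is $0$.

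The only new ingredient is computing $p(x)$ while reading $x$. Treating $a$ as $0$ and $b$ as $1$, and using length-then-lex order, the lexicographic position of $x = x_1 \cdots x_n$ satisfies the linear recurrence $p_0 = 1$ and $p_i = 2 p_{i-1} + x_i$, giving $p(x) = 2^n + \sum_{i=1}^n x_i 2^{n-i}$. This is embeddable by a $2 \times 2$ linear operator of the same shape as the one in Equation \ref{eq:pfa-n}, with a $2$ replacing the diagonal $1$; the two transition operators for $a$ and $b$ differ only in the additive term contributed by the input bit. Upon reading $\cent$, the input-index entry is initialized to $1$, and then updated by this recurrence on each input symbol. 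After $\dollar$, the main part of the state vector carries (up to normalization) the values $1$, a scaled copy of $\alpha_k$, and a scaled copy of $k - p(x)$, with acceptance/rejection states chosen exactly as in the unary proof.

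The bounded-error analysis is then identical to Theorem \ref{thm:q-unary}. If $x \in L$, the honest prover supplies the unique certificate of length $p(x)$ matching the first $p(x)$ base-$4$ digits of $\alpha_L$ and ending in $1$; both ``error'' entries vanish and $x$ is accepted with probability at least $9/13$. If $x \notin L$, any certificate either has wrong length (so $|k - p(x)| \geq 1$), wrong digits (so $|\alpha_k| \geq 1$ after the appropriate scaling by a constant $t$), or wrong final bit (forcing deterministic rejection); in every case the dominant rejecting amplitude gives rejection probability at least $4/5$, and the gap can be amplified by parallel repetition with postselection on unanimous outcomes.

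The main obstacle is the same bookkeeping step that appeared in Theorems \ref{thm:mapfa-usquare} and \ref{thm:q-unary}: packaging the four simultaneous linear sub-computations (digit-stripping, certificate length, input-index recurrence, and final-bit check) into a single valid superoperator per symbol, while routing all ``leftover'' amplitude into sink, non-postselected auxiliary states so the accept/reject ratio is determined purely by the main entries. Since the construction uses only a constant number of states and linear sub-operations, this packaging follows the templates already developed in the earlier proofs; the exponential length of the certificate is forced solely by the magnitude of $p(x)$ and does not affect the state complexity or the error analysis.
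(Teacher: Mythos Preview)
Your proposal is correct and takes a genuinely different, simpler route than the paper. The paper does \emph{not} compute the lexicographic index $p(x)$ from the input; instead its certificate lists every string up to $x$ in length-lex order together with its membership bit, $\varepsilon \,\#\, F_L(1) \,\#\, a \,\#\, F_L(2) \,\#\, \cdots \,\#\, x \,\#\, F_L(i_x)$, and the verifier checks that consecutive listed strings have encodings differing by exactly~$1$. To accumulate these per-pair discrepancies into a single non-negative error term the paper tensors the relevant block of the state with itself (turning each difference $d_i$ into $d_i^2$) and keeps a running sum, plus a separate comparison of the last listed string with $x$. Your observation that $p_i = 2p_{i-1} + x_i$ with $p_0 = 1$ lets the verifier compute $p(x)$ with a two-state linear sub-operator while scanning $x$, so the argument collapses to a near-literal replay of Theorem~\ref{thm:q-unary} with $n+1$ replaced by $p(x)$: no string listing, no tensoring, and a shorter certificate of length $p(x)=\Theta(2^{n})$ versus the paper's $\Theta(n\,2^{n})$. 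The paper's detour mirrors the two-way construction it adapts from~\cite{SY17}; yours is the tighter fit for the one-way MA setting here. One small caveat you share with the paper's unary analysis: when the certificate has the correct length but a wrong digit, the worst case is $|\alpha_k| = 2/3$ (a nonzero integer offset plus a tail in $[0,1/3]$), so with $t=2$ the rejection bound is $16/25$ rather than $4/5$; this is still bounded away from $1/2$ and the amplification step you invoke closes the gap.
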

\begin{proof}
    Let $x \in \{a,b\}^*$ be the given input. We define the certificate with several symbols, but it can be easily converted to binary with linear increase in the size. 
    
    The expected certificate is composed by all strings until $x$ (included) in the lexicographic order with their membership bits: 
    \[
    \varepsilon \# F_L(1) \# a \# F_L(2) \# b \# F_L(3) \# aa \# F_L(4) \# ab \# F_L(5) \# \cdots \# x \# F_L(i_x),
    \]
    where $i_x$ is the index of $x$ in the lexicographic ordering of binary strings. If the certificate is not in this form, then it is rejected with probability 1.

    We replace $a$ with 0 and $b$ with 1, and so, encode each binary string in base-2 by also putting additional 1 as the leading digit, e.g., we encode $aaba$ as $10010$ and $bb$ as $111$. While reading the certificate, the verifier check whether the strings are in lexicographic order or not. For example, after 111, 0000 should come. We encode 111 as 1111 and also add 1. We encode the next string as $ 1y_{i+1} $. Then, we compare their equality by subtracting: if $ 1111+1 = 1y_{i+1} $. On the state vector, we use two entries for them:
    \[
        \myvector{ e_i + 1 = (2^3+2^2+2^1+2^0)+1 \\ e_{i+1} } = \myvector{16 \\ e_{i+1}},
    \]
    where $e_{i+1}$ is the encoding of $1y_{i+1}.$
    We check their difference by using one of the entries
    \[
        \myvector{16-e_{i+1}} = \myvector{d_{i}}.
    \]
    If $y_{i+1}$ is 0000, $e_{i+1}$ is also 16. Thus, $d_{i} = 0$. Otherwise, $|d_{i}|$ is at least 1. We do this check iteratively for all successive pairs on the certificate. We tensor the related part of state vector with itself to obtain $d^2_{i+1}$, which is always non-negative. (For this, we also tensor the part of quantum operators.) Then, we use one of the entries for the cumulative summation of these differences:
    \[
    \myvector{d_1^2+d_2^2+d_3^2+\cdots},
    \]
    which is 0 if there is no defect, and at least 1, otherwise. In this iteration, one additional check is made to compare the last string in the certificate, say $y_t$, with the input itself (x), as they are expected to be identical.

    As described in the proof of Theorem~\ref{thm:q-unary}, the provided membership bits (as a part of certificate) are compared with the digits of $\alpha_L$. Here, if the last symbol of the certificate is 0 (a membership bit), then input is rejected probability 1. The quantum decision (at the end) is made only if this bit is 1.
    
    At the end, we have the following non-postselecting values on the state vector:
    \[
        \myvector{1 \\ 2 \alpha_k \\ 2 ( d_1^2+d_2^2+\cdots+d_k^2 ) },
    \]
    where $k$ is the number of strings on the certificate including the empty string and $\alpha_k$ is the same value as described for the unary case. 
    
    The probability/error analysis here is the same as in the proof of Theorem~\ref{thm:q-unary}. So, the members are accepted with high probability, and the other strings are rejected with high probability.
\end{proof}

\section{Verification with arbitrarily long certificates}
\label{sec:arbitrary}

One-way finite automata with two counters (1D2CAs) can simulate the computation of any Turing machine (TM) on a given input \cite{Min67}. Therefore, they can recognize all and only recursive enumerable (Turing recognizable) languages. The computation history of a 1D2CA provided by a honest prover can be checked by bounded-error two-way PFAs \cite{CL89}. We obtain a similar result for MA-PostPFAs on unary decidable (recursive) languages.

\begin{theorem}
    Any decidable unary language $L \subseteq \{a\}^*$ can be verified by a bounded-error MA-PostPFA $M$.
\end{theorem}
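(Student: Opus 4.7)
The plan is to have the prover write down the entire computation trace of a Minsky-style two-counter machine that decides $L$, and to have the MA-PostPFA verify this trace via the unary equality-check primitives that Section~\ref{sec:linear} already exhibits for bounded-error PostPFAs.

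First, fix a 1D2CA $C$ that decides $L$, which exists by Minsky's simulation of Turing machines by two-counter automata. For an input $a^n$, the computation of $C$ is a finite sequence of configurations $\gamma_0, \gamma_1, \ldots, \gamma_T$, each specifying the control state, the input-head position, and the two counter contents. The honest prover writes this sequence into the certificate using a fixed encoding: control states come from a finite alphabet, while head positions and counter contents are written as unary blocks separated by delimiters.

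The verifier $M$ first performs deterministic syntactic checks, rejecting with probability $1$ if the certificate is not a well-formed list of configurations. Then $M$ reduces semantic consistency of the trace to a collection of unary (off-by-one) equality checks: (i) $\gamma_0$ has $C$'s initial state with both counters equal to zero and head position zero; (ii) for each pair $(\gamma_t, \gamma_{t+1})$ the state transition is permitted by $C$'s finite transition table, each counter block of $\gamma_{t+1}$ equals the corresponding block of $\gamma_t$ shifted by $-1$, $0$, or $+1$ as dictated by that transition, and the head block advances by $0$ or $1$ accordingly; (iii) $\gamma_T$ is accepting and its head block has length exactly $n$, which is checked against the input itself. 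Each individual check is of the $\mathtt{UEQUAL}$ or $\mathtt{UEQUAL\mbox{-}BLOCKS}$ type in the sense of Section~\ref{sec:linear}, so it admits a bounded-error PostPFA subroutine.

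The main obstacle is combining arbitrarily many such checks into one forward-only bounded-error pass, since the trace is allowed to be arbitrarily long. The standard trick is to use the probabilistic branching of $M$ to select one check uniformly at random among all consecutive-configuration pairs and the two endpoints, perform just that check, and let postselection normalize away the branching. If the prover is honest, every branch postselects into acceptance and $M$ accepts with probability arbitrarily close to $1$; if any transition or endpoint in the trace is wrong, the corresponding branch fails with constant probability, which then yields a constant gap in the overall accept/reject ratio after postselection. Standard amplification by running several independent copies in parallel then produces the desired bounded error.
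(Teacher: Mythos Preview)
Your outline matches the paper's: encode the halting run of a two-counter machine for $L$ as the certificate and have the verifier confirm it via unary equality checks between consecutive configurations and against the input length. Where you diverge is in how the unboundedly many checks are combined, and that step does not work as written.

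You propose to sample one consecutive pair ``uniformly at random'' and check only that pair, letting postselection absorb the branching. Even granting that a one-way PostPFA can realise a uniform choice over the $T$ transition pairs (this already needs an argument, since the obvious fork-at-each-boundary scheme gives geometric, not uniform, weights), a single faulty transition is then caught with conditional probability only about $1/T$, not a constant. A dishonest prover on a non-member $a^n$ can supply a trace that is correct at every step except one (say the last, where the control state is illegitimately flipped to accepting), and your verifier rejects with probability $\Theta(1/T)\to 0$. Running a fixed number of parallel copies cannot repair this, because the number of copies required would have to grow with~$T$.

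The paper does not sample. It relies on precisely the $\mathtt{UEQUAL\mbox{-}BLOCKS}$-style capability you yourself cite from Section~\ref{sec:linear}: a bounded-error PostPFA can carry out \emph{all} of the block-equality checks in a single left-to-right pass, with an error bound that does not depend on the number of blocks. Under that primitive a single mismatch anywhere in the sequence already forces a constant accept/reject gap, so soundness holds for arbitrarily long traces. Replacing your sampling paragraph by a direct appeal to that primitive (and verifying that the off-by-one comparisons between successive configurations, together with the comparison against the actual input, fit its format) closes the gap.
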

\begin{proof}
    Let $M_L$ be a 1D2CA recognizing $L$. For a given input $x = a^n $, its configuration at step $t \geq 0$ can be represented as
    \[
        C_t = (state_t,\cent a^i,a^{n-i}\dollar,a^j,a^k), 
    \]
    where 
    \begin{itemize}
        \item $state_t$ is the state of $M_L$ at the $t$-th step,
        \item the input is divided into two as ``$\cent a^i$'' and ``$a^{n-i}\dollar$'' such that the input head is on the right-most symbol of ``$\cent a^i$'', and,
        \item the values of first and second counters are respectively $j$ and $k$.
    \end{itemize}
    The verifier $M$ simulates the computation of $M_L$ on the given input by asking a valid configuration history from Merlin:
    \[
        C_0 ~ C_1 ~ C_2 ~ \cdots ~ C_t ~ C_{t+1} ~ \cdots,
    \]
    where $C_0 = (state_0,\cent,a^n\dollar,,)$. Suppose that Merlin provides 
    \[
        C_0' ~ C_1' ~ C_2' ~ \cdots ~ C_t' ~ C_{t+1}' ~ \cdots.
    \]
    The verifier $M$ probabilistically checks whether the length of input provided in $C_0'$ is equal to the actual input length, and it deterministically checks the rest of details of $C_0'$. Then, $M$ checks whether $C_{t+1}'$ is a valid successor for $C_{t}'$ for every $t \geq 0$. By scanning $C_t'$, $M$ stores the current state, the symbol under the input head ($\cent$, $a$, or $\dollar$), and the status of the counters, i.e., whether the value of first/second counter is zero or nonzero. Then, based on the transition function of $M_L$, $M$ determines the followings:
    \begin{itemize}
        \item the next state,
        \item the input head is moved one symbol to the right or not,
        \item the value of the first counter is increased or decreased by 1 or does not change, and,
        \item the value of the second counter is increased or decreased by 1 or does not change.
    \end{itemize}
    Here the length of input parts and counter values differ by at most one.
    Thus, $M$ makes unary equality checks between $C_t'$ and $C_{t+1}'$ to determine whether $C_{t+1}'$ is a valid successor of $C_t'$. 
    
    If all checks are valid, then $M$ makes its decision on the input based on the decision of $M_L$. Otherwise, the input is rejected.
\end{proof}

This result can be extended for recursive enumerable unary languages by using ``weak'' MA-PostPFAs. (We can also use MA-PostPFAs indeed, but then we force the provers to provide finite certificates while the actual computation does not halt.)

\section{Concluding remarks}
\label{sec:conc-remark}

We introduce Merlin-Arthur automata where the certificate provided by Merlin is read once by Arthur before processing the input. This set-up can be seen as a limited nondeterminism: all nondeterministic choices is made before processing the input.

We show that Merlin-Arthur deterministic automata cannot verify any nonregular language, but they can be state efficient. 

We leave them open if bounded-error Merlin-Arthur probabilistic \& quantum automata can verify any nonregular language or they can be state efficient compared to their standalone models. But, we show that they can verify nonstochastic languages with cutpoints by using sublinear size certificates.

With the capability of postselection, their verification power is increased dramatically in the bounded-error case. Quantum automata can verify every unary language with linear size certificates and every binary language with exponential size certificates. We have weaker results for probabilistic automata: they can verify every decidable unary language but without any bound on the certificate size.

\section*{Acknowledgments}

We thank Viktor Olej\'ar for kindly sharing the references on multiple-entry DFAs, Kai Salomaa for kindly providing a copy of \cite{HolzerSY01}, and Mika Hirvensalo for kindly answering our question on nonstochastic unary languages.

Yakary{\i}lmaz was partially supported by the Latvian Quantum Initiative under European Union Recovery and Resilience Facility project no. 2.3.1.1.i.0/1/22/I/CFLA/001, the ERDF project Nr. 1.1.1.5/19/A/005 ``Quantum computers with constant memory'' and the project ``Quantum algorithms: from complexity theory to experiment'' funded under ERDF programme 1.1.1.5.

\bibliographystyle{plain}
\bibliography{ref}

\end{document}